\documentclass[a4paper,USenglish]{lipics-v2018}
\usepackage{microtype}
\usepackage{xspace}
\usepackage{cite}
\usepackage{adjustbox}
\usepackage{tikz}
\hypersetup{hidelinks,
  pdfauthor={S. Beyer, M. Chimani, and J. Spoerhase},
  pdftitle={A Simple Primal-Dual Approximation Algorithm for 2-Edge-Connected Spanning Subgraphs}}
\usepackage[nosemicolon,linesnumbered,vlined,ruled]{algorithm2e}
\SetKwInOut{Input}{Input}
\SetKwInOut{Output}{Output}
\SetKwInOut{Data}{Data}
\hideLIPIcs
\nolinenumbers

\bibliographystyle{plainurl}

\usetikzlibrary{shapes.geometric}

\newcommand{\NNFrac}{\ensuremath{\mathbb{R}_{\geq 0}}\xspace}
\newcommand\pathset[3]{\ensuremath{#1[#2\text{\adjustbox{scale={0.65}{1}}{\textasciitilde\!\textasciitilde\,}}#3]}}
\newcommand\buildpath[1]{\ensuremath{\langle #1 \rangle}}
\newcommand\cyclewitness[2]{\ensuremath{(#1, #2)}}
\newcommand\badcycle[2]{\ensuremath{\cyclewitness{#1}{\buildpath{#2}}}}
\newcommand\pathwitness[2]{\ensuremath{(#1, #2)}}
\newcommand\badpath[2]{\ensuremath{\pathwitness{#1}{\buildpath{#2}}}}

\newcommand{\sitE}{\textsf{E}\xspace}%
\newcommand{\sitU}{\textsf{U}\xspace}%
\newcommand{\sitD}{\textsf{D}\xspace}%

\newcommand{\PiX}[1]{\ensuremath{\Pi^{#1}}}
\newcommand{\PiD}{\PiX{\sitD}}
\newcommand{\PiU}{\PiX{\sitU}}

\newcommand{\aX}[1]{\ensuremath{a^{#1}}}
\newcommand{\aD}{\aX{\sitD}}

\newcommand{\mapD}{\ensuremath{\mu}}
\newcommand{\mapU}{\ensuremath{\eta}}

\newcommand{\startsD}{\ensuremath{\mathbb{S}}}
\newcommand{\sidx}[1]{\ensuremath{{s(#1)}}}

\def\bigO{\ensuremath{\mathcal{O}}}
\def\low{\operatorname{low}}

\makeatletter
\def\th@claim{%
  \thm@headfont{%
    \textcolor{darkgray}{$\vartriangleright$}\nobreakspace\sffamily}%
  \normalfont %
  \thm@preskip\topsep \divide\thm@preskip\tw@
  \thm@postskip\thm@preskip
}
\newenvironment{claimproof}[1][\normalfont\sffamily{}\proofname]{\par
  
  \pushQED{\qed}%
  \normalfont \topsep0\p@\@plus6\p@\relax
  \trivlist
  \item[\hskip\labelsep
        \color{darkgray}\sffamily\bfseries
    #1\@addpunct{.}]\ignorespaces
}{%
  \popQED\endtrivlist%
}
\makeatother
\theoremstyle{claim}
\newtheorem{claim}{Claim}
\title{A Simple Primal-Dual Approximation Algorithm for 2-Edge-Connected Spanning Subgraphs}
\titlerunning{A Simple Primal-Dual Approximation Algorithm for 2ECSS}

\author{Stephan Beyer}{Theoretical Computer Science, Osnabrück University, Germany}{stephan.beyer@uni-osnabrueck.de}{0000-0001-5274-0447}{}
\author{Markus Chimani}{Theoretical Computer Science, Osnabrück University, Germany}{markus.chimani@uni-osnabrueck.de}{0000-0002-4681-5550}{}
\author{Joachim Spoerhase}{Department of Computer Science, Aalto University, Finland}{joachim.spoerhase@aalto.fi}{0000-0002-2601-6452}{}
\authorrunning{S. Beyer, M. Chimani, and J. Spoerhase}

\Copyright{Stephan Beyer, Markus Chimani, and Joachim Spoerhase}

\funding{%
 The first two authors are supported
  by the German Research Foundation~(DFG), grant CH~897/3-1.
 The third author is supported
  by the European Research Council (ERC) under the European Union’s Horizon 2020 research and innovation programme, grant~759557,
  and by the Academy of Finland, grant~310415.}

\subjclass{Mathematics of computing $\rightarrow$ Approximation algorithms, %
  Mathematics of computing $\rightarrow$ Paths and connectivity problems}
\keywords{%
 network design%
 ,
 2-edge-connected%
 ,
 primal-dual%
 ,
 approximation algorithm%
}
\EventLongTitle{}
\EventShortTitle{}
\EventAcronym{}
\EventYear{2018}
\EventDate{}
\EventLocation{}
\begin{document}

\maketitle

\begin{abstract}
  We propose a simple and natural approximation algorithm
   for the problem of finding a 2-edge-connected spanning subgraph
   of minimum total edge cost in a graph.
  The algorithm maintains a spanning forest starting with an empty edge set.
  In each iteration, a new edge incident to a leaf is selected in a natural greedy manner and added to the forest.
  If this produces a cycle, this cycle is contracted.
  This growing phase ends when the graph has been
  contracted into a single node and a subsequent cleanup step removes redundant edges in reverse order.

  We analyze the algorithm using the primal-dual method
   showing that its solution value is at most 3 times the optimum.
  Although this only matches the ratio of existing primal-dual algorithms,
   we require only a single growing phase,
   thereby addressing a question by Williamson.
  Also, we consider our algorithm to be
   not only conceptually simpler than the known approximation algorithms
   but also easier to implement in its entirety.
  For $n$ and $m$ being the number of nodes and edges, respectively,
   it runs in $\bigO(\min\{nm, m + n^2 \log n\})$ time
   and $\bigO(m)$ space
   without data structures more sophisticated than binary heaps and graphs,
   and without graph algorithms beyond depth-first search.
\end{abstract}

\section{Introduction}

An undirected multigraph $G = (V,E)$
 with $n := |V|, m := |E|, m \geq n$,
 is \emph{2-edge-connected}
 if for every edge $e \in E$ the graph
$G - e := (V, E \setminus \{e\})$ is connected.
The \emph{minimum 2-edge-connected spanning subgraph} problem~(2ECSS) is defined
as follows: Given a 2-edge-connected undirected multigraph $G = (V,E)$
with edge costs $c\colon E \to \NNFrac$, find an edge subset $E'\subseteq E$ of minimum cost
$c(E') := \sum_{e \in E'} c(e)$ such that
 $G' = (V, E')$ is 2-edge-connected.
Any edge of $G$ may only be used once in $G'$.
2ECSS is a fundamental NP-hard network design problem
 that arises naturally in the planning of infrastructure
 where one wants to guarantee a basic fault tolerance.

\subparagraph*{Related work.}

Some algorithms mentioned below
 work not only for 2ECSS but
 for more general problems,
 like $k$ECSS with $k\geq 2$.
Since we are interested in the former,
 we describe their results,
 in particular the achieved approximation ratios,
 in the context of 2ECSS.
We restrict our attention to algorithms
 able to work on general edge costs
 (in contrast to, e.g., metric or Euclidean edge costs).

The first algorithms \cite{FJ81,KT93}
 yield $3$-approximations
 by using a minimum spanning tree in $G$
 and augmenting it to become 2-edge-connected.
The factor $3$ is based on
 a $2$-approximation for the latter problem (often called \emph{weighted tree augmentation}).
The algorithm of~\cite{FJ81} runs in $\bigO(n^2)$ time
 and that of \cite{KT93} in $\bigO(m + n \log{n})$.

In~\cite{KV94},
 a $2$-approximation algorithm
 is obtained
 by reducing the problem
 to a weighted matroid intersection problem
 that can be solved
 in time $\bigO(n (m + n \log n) \log n)$~\cite{G95}.
The $2$-approximation algorithm in~\cite{J01}
 is based on iterative rounding of solutions to a linear programming formulation
 that we will see in a later section.
On the negative side,
 no algorithm with factor less than $2$ is known,
 and,
 unless P\,$=$\,NP,
 there cannot be a polynomial-time approximation with ratio better than
 roughly $1+\frac1{300}$~\cite{P10}.

Besides the algorithms mentioned above,
 there is a separate history
 of applying the primal-dual method.
The basic idea of a primal-dual algorithm
 is that a feasible solution to the dual of the aforementioned linear program is computed
 and this process is exploited to compute an approximate primal solution.
There are several primal-dual $3$-approximation algorithms~\cite{KR93,SVY92,WGMV95,GGW98,GGPSTW94}
  with the best running time being $\bigO(n^2 + n \sqrt{m \log\log n})$.
All algorithms
 grow a solution in two phases:
 they first obtain a spanning tree,
 and then augment that tree to be 2-edge-connected.
Then, unnecessary edges are deleted in a cleanup phase to obtain minimality.
Most algorithms are algorithmically complex
 and, for example,
 require solving multiple network flow problems.

\subparagraph*{Contribution.}

We present a \emph{simple} 3-approximation algorithm,
 analyzed using primal-dual techniques,
 that finds a minimal 2-edge-connected spanning subgraph on general edge costs.

In comparison to the other primal-dual algorithms,
 we grow the solution in a single phase, i.e,
 we omit obtaining an intermediate spanning tree.
Thus we make progress on the question by Williamson~\cite{W93}
 if it is possible ``to design a single phase algorithm for some class of edge-covering problems''.
The (to our best knowledge) new conceptual idea is to
 modify the classical synchronized primal-dual scheme by growing the solution
 only at leaves of the current solution.
We contract arising 2-edge-connected components on the fly.
Although we do not beat the so-far best primal-dual approximation ratio $3$,
 our algorithmic framework may offer new insight for further improvements.

Moreover,
 our algorithm is conceptually much simpler
 than the aforementioned approximation algorithms.
In contrast to the 2-approximation algorithms
 based on
 weighted matroid intersection
 or linear programming,
 our algorithm requires only trivial data structures (arrays, lists, graphs, and optionally binary heaps)
 and no graph algorithms beyond depth-first search.
It is simple to implement in its entirety
 to run in $\bigO(\min\{nm, m + n^2 \log n\})$~time
 while occupying only $\bigO(m)$~space.

\subparagraph*{Preliminaries.}

We always consider an undirected multigraph $G = (V, E)$
 with node set $V$ and edge set $E$.
As we allow parallel edges, we identify edges by their names, not by their incident nodes.
For each $e \in E$,
 let $V(e) := \{v, w\} \subseteq V$
 be the two nodes incident to $e$.
We may describe subgraphs of $G$
 simply by their (inducing) edge subset $H \subseteq E$.
By $V(H) := \bigcup_{e \in H}{V(e)}$
 we denote the set of nodes spanned by the edges of $H$.
For each $v \in V(H)$,
 let $\delta_H(v) := \{e \in H \mid v \in V(e) \}$ be the edges incident to $v$.
For any $S \subsetneq V(H)$,
 let $\delta_H(S) := \{e \in H \mid V(e) = \{u, v\}, u \in S, v \notin S\}$.
The \emph{degree} $\deg_H(v) := |\delta_H(v)|$ of $v \in V$ in $H$
 is the number of incident edges of $v$ in $H$.

A \emph{path} $P$
 of length $k \geq 0$
 is a subgraph
 with $P = \{e_1, \ldots, e_k\}$
 such that
  there is an orientation of its edges where
  the head of $e_i$ coincides with the tail of $e_{i+1}$ for $i < k$.
In such an orientation,
 let $u$ be the tail of $e_1$
 and $v$ the head of $e_k$.
We call $u$ and $v$ the \emph{endpoints} of $P$,
 and $P$ a
 \emph{$u$-$v$-path}
 (or, equivalently, a \emph{path between $u$ and $v$}).
Observe that
 our definition of paths
 allows nodes but not edges to repeat
 (due to set notation).
A path $P$ is \emph{simple}
 if and only if $\deg_P(v) \leq 2$ for all $v \in P$.
For a simple $u$-$v$-path $P$,
 we call $V(P) \setminus \{u, v\}$
 the \emph{inner nodes} of $P$.
We call a path \emph{closed}
 if both endpoints coincide
 (i.e., if it is a $u$-$v$-path with $u = v$),
 and \emph{open} otherwise.
A \emph{cycle} is a closed path of length at least $2$.
We say two paths $P_1, P_2$ are \emph{disjoint}
 if and only if $P_1 \cap P_2 = \varnothing$,
 i.e., they do not share a common edge
 (they may share nodes).

Let $G$ be 2-edge-connected.
An edge $e \in E$ is \emph{essential}
 if and only if $E \setminus \{e\}$ is not 2-edge-connected;
 it is \emph{nonessential} otherwise.
An \emph{ear} is a simple path $P$ of length at least $1$
 such that $E \setminus P$ is 2-edge-connected.

For any function $f\colon A \to B$ and any $A' \subseteq A$, we
 denote by $f(A') := \{f(a)\mid a \in A'\}$ the image of $A'$ under
 $f$ (unless otherwise stated).
We also define
 $f^{-1}(b) := \{a \in A \mid f(a) = b\}$.

\subparagraph*{Organization of the paper.}

Although our algorithm (which is described in Section~\ref{section:alg})
 turns out to be surprisingly simple,
 its analysis is more involved.
In Section~\ref{section:analysis},
 we will show its
 time and space complexity
 as well as its approximation ratio
 (under the assumption that a particular \emph{leaf-degree property}, which may be of independent interest, holds in every step of the algorithm).
The technical proof of the leaf-degree property
 is deferred to Section~\ref{section:degree-property}, where it is shown independently (also to simplify the required notation).

\section{The Algorithm}
\label{section:alg}

Our algorithm is outlined in Algorithm~\ref{alg}.
Given a (multi-)graph $G=(V,E)$ with cost function $c\colon E \to \NNFrac$,
 the main \emph{grow phase}
 selects edges $T\subseteq E$
 such that $T$ is spanning and 2-edge-connected,
 but not necessarily minimal.
The central idea of the grow phase---in contrast to several other primal-dual approaches---is to only grow the solution with edges that are currently attached to leaves.\footnote{%
This is a key difference to the \emph{second} phase suggested in \cite{KR93}, which on first sight looks somewhat similar (but leads to very different proof strategies).
In particular,
 we can directly attack the 2-edge-connected subgraph
 in a single phase,
 instead of a multi-phase growing procedure where each phase has to consider distinct objectives and rules.}
Afterwards, a trivial \emph{cleanup phase}
 removes nonessential edges from $T$,
 checking them in reverse order,
 to obtain the final solution.

\begin{algorithm}[tbp]
 graph $G' = (V', E')$ with edge costs $c':=c$ as a copy of $G=(V, E)$\;
 solution $T := \varnothing$\;
 forest $F := (V',\varnothing)$\;
 \BlankLine
 \While(\tcp*[f]{grow phase}){$F$ is not a single node}{
   Simultaneously for each leaf in $F$, decrease the cost $c'$ of its incident edges in $G'$
    until an edge, say $\tilde e$, gets cost $0$;
    by this, an edge cost $c'$ is reduced with double speed if it is incident to two leaves\;
    \label{line:delta}
   Add $\tilde e$ to $F$ and to $T$\;
    \label{line:add}
   \lIf{$\tilde e$ closes a cycle $Q$ in $F$}{contract $Q$ in $F$ and in $G'$}
    \label{line:cycle}
 }
 \BlankLine
 \ForAll(\tcp*[f]{cleanup phase}){$e \in T$ in reverse order}{%
  \lIf{$T - e$ is 2-edge-connected}{%
   remove $e$ from $T$
    \label{line:cleanup}
  }
 }
 \caption{Approximation algorithm for 2ECSS}%
 \label{alg}%
\end{algorithm}

The rest of this paper focuses on proving
 our main Theorem~\ref{thm:main} below.

\begin{theorem}\label{thm:main}
 There is an algorithm for 2ECSS
  that runs in $\bigO(\min\{n m, m + n^2 \log n\})$ time
  and $\bigO(m)$ space.
 It obtains solutions within three times the optimum.
\end{theorem}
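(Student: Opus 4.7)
The plan is to adopt the standard primal-dual template, with the key twist that dual variables are grown only at cuts corresponding to leaves of the current forest $F$. I would first set up the cut-based LP relaxation of 2ECSS with constraints $\sum_{e \in \delta(S)} x_e \geq 2$ for every nonempty $S \subsetneq V$ and $x \geq 0$, and its dual having objective $2 \sum_S y_S$ and feasibility constraint $\sum_{S : e \in \delta(S)} y_S \leq c_e$ for all $e$. A cost reduction by $\epsilon$ on an edge incident to a leaf $v$ of $F$ is interpreted as raising the dual variable $y_{S_v}$ by $\epsilon$, where $S_v \subseteq V$ is the set of original vertices that have been contracted into $v$. The ``double-speed'' reduction on edges between two leaves is precisely the joint contribution of two simultaneously growing dual variables, so dual feasibility is preserved throughout the grow phase; hence $2\sum_S y_S \leq \mathrm{OPT}$ by weak duality.

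For feasibility of the output I would argue inductively that after each grow-phase iteration, the edges of $T$ inside any contracted supernode form a 2-edge-connected subgraph on its original vertex set (a new ear is attached each time line~\ref{line:cycle} triggers a contraction). When the loop exits with a single supernode, $T$ is therefore spanning and 2-edge-connected. The cleanup phase only deletes an edge $e$ when $T-e$ remains 2-edge-connected, so feasibility is preserved and inclusion-minimality is enforced.

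For the approximation factor, the usual primal-dual accounting yields
\[
c(T) \;=\; \sum_{e \in T} c_e \;=\; \sum_{e \in T}\sum_{S : e \in \delta(S)} y_S \;=\; \sum_S y_S \cdot |\delta_T(S)|,
\]
using that each $e \in T$ is added the moment its residual cost hits zero. A time-averaging argument then reduces the desired inequality $c(T) \leq 3\cdot 2\sum_S y_S$ to the pointwise statement that, at every instant of the grow phase, the average $T$-degree of the active leaves (measured in the current contracted graph) is at most $6$. This is the leaf-degree property announced in Section~\ref{section:analysis} and proved in Section~\ref{section:degree-property}; integrating it against the dual growth rate and combining with weak LP duality yields $c(T) \leq 3\cdot\mathrm{OPT}$.

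The main obstacle will be proving this leaf-degree property. Since the final $T$ is not known during the grow phase, the statement must hold structurally for every possible leaf-and-contraction history, and the proof has to exploit both the greedy leaf-only growth rule and the reverse-order cleanup, the latter guaranteeing that every surviving edge has a short witness path certifying its necessity. The complexity claims are comparatively routine: a naive linear scan per iteration gives the $\bigO(nm)$ bound, while a binary heap keyed on residual edge costs, together with array-based bookkeeping for the contractions, yields the $\bigO(m + n^2 \log n)$ bound; storing only $G'$, $F$, $T$, and the heap fits in $\bigO(m)$ space.
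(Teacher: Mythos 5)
There is a genuine gap in your LP setup, and it propagates through the whole accounting. You relax 2ECSS without the upper bounds $x_e \leq 1$, so your dual has only the variables $y_S$ and the constraint $\sum_{S\colon e \in \delta(S)} y_S \leq c_e$. But the algorithm keeps raising $y_{S(v)}$ for \emph{every} leaf $v$ of $F$, including leaves whose unique incident forest edge $e$ is already in $T$ and hence already tight ($c'(e)=0$); the cut $\delta_G(S(v))$ contains $e$, so after one more grow step the constraint for $e$ is violated and the constructed $y$ is \emph{not} dual feasible. Your weak-duality step $2\sum_S y_S \leq \mathrm{OPT}$ and the identity $c(T)=\sum_S y_S\,|\delta_T(S)|$ therefore do not hold as stated. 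This is exactly why the paper keeps $x_e \leq 1$ in the relaxation: the resulting dual variables $z_e$ are raised in lockstep on forest edges incident to leaves (Lemma~\ref{lemma:algorithm-lp}), which restores feasibility of \eqref{eq:dual:cost}, at the price of the dual objective becoming $2\sum_S y_S - \sum_e z_e$.

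Once the $z_e$ are present, your reduction to ``average leaf degree at most $6$'' is no longer the right target. The per-step inequality needed (cf.~\eqref{eq:primaldual:essence}) is $\sum_{v\in L}\deg_{\bar T}(v) \leq 6|L| - 2|L_1| - 3|L_2| = 3(|L|+|L_0|)+|L_1|$, i.e., the refined leaf-degree property of Lemma~\ref{lemma:degree-property} that distinguishes isolated leaves, leaves whose forest edge survives cleanup, and leaves whose forest edge does not; the plain bound $\leq 6|L|$ is weaker and would not close the argument. (You correctly anticipate that the cleanup/essentiality structure, Lemma~\ref{lemma:essential}, is what makes this property provable, and that the degrees must be taken with respect to the final solution $\bar T$ mapped into the contracted graph.) A smaller issue: for the $\bigO(m+n^2\log n)$ bound, ``a binary heap keyed on residual edge costs'' is not enough, since all eligible residual costs change every iteration and edges incident to two leaves shrink at double speed; one needs to key on $\Delta(e)=c'(e)/\ell_e$ with a global additive offset, handle the $\ell_e$ transitions, and first delete futile parallel edges so that degrees (and hence per-step update work) are $\bigO(n)$, as in Lemma~\ref{lemma:time-dense}.
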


\section{Analysis of Algorithm~\ref{alg} (Proof of Theorem~\ref{thm:main})}
\label{section:analysis}

We call the iterations within the phases of Algorithm~\ref{alg}
 \emph{grow steps} and \emph{cleanup steps}, respectively.
In a grow step, a cycle may be contracted and some edges become incident to the contracted node.
As we identify edges by their names, the names of these edges are retained although their incident nodes change.

Let $E'$ and $E_F$ be the edge set of $G'$ and $F$, respectively.
During the algorithm we have the following invariants:
 both $G'$ and $F$ use the common node set $V'$ that describes a partition of~$V$;
 we consider $T$ to form a subgraph of $G$;
 each edge in $E_F$ represents an edge of $T$
 that is not part of a cycle in $G$;
 we have~$E_F \subseteq E' \subseteq E$.

Initially, each node of $V$ forms an individual partition set,
 i.e., $|V'| = |V|$, and $E_F = \varnothing$.
We merge partition sets (nodes of $V'$, cf.\ line~\ref{line:cycle})
 when we contract a cycle, i.e.,
 when the corresponding nodes in $V$ induce a 2-edge-connected subgraph in $T$.
Arising self-loops are removed both from $G'$ and $F$.
The grow phase terminates once $|V'| = 1$,
 i.e., all nodes of $V$ are in a common 2-edge-connected component.

Observe that for an edge $e \in E'$, we naturally define
 $V(e) \subseteq V$ as its incident nodes in original $G$, and
 $V'(e) \subseteq V'$ as its incident nodes in $G'$ and $F$.

Let $L := \{v \in V' \mid \deg_F(v) \leq 1\}$
 be the set of leaves (including isolated nodes) in $F$.
For any edge $e \in E'$,
 let $\ell_e := |V'(e) \cap L| \in \{0, 1, 2\}$
 be the number of incident nodes of $e$ that are leaves in~$F$.
An edge $e \in E'$ is \emph{eligible}
 if
 $e \notin T$ and
 $\ell_e \geq 1$.
Let $\Delta(e) := \frac{c'(e)}{\ell_e}$ for eligible edges $e \in E'$.
Now line~\ref{line:delta}
 can be described as
 first finding
 the minimum (w.r.t.\ $\Delta$) eligible edge $\tilde e \in E$,
 and then, for each eligible edge $e$,
 decreasing $c'(e)$ by $\ell_e \Delta(\tilde e)$.
For convenience,
 we denote $\Delta(\tilde e)$ by $\tilde\Delta$.

\subsection{Time and Space Complexity}

Here we state the main time and space complexity results
 for Algorithm~\ref{alg}.
\begin{lemma}\label{lemma:time}
 Algorithm~\ref{alg} can be implemented to run
  in $\bigO(n m)$ time
  and $\bigO(m)$ space
  using only arrays and lists.
\end{lemma}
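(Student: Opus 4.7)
The plan is to analyze the grow phase and the cleanup phase separately, and to bound the work per iteration using only standard array- and list-based structures. First I would bound the number of grow iterations. Each iteration adds exactly one edge $\tilde e$ to $F$, and is either a \emph{tree step}, which joins two components of $F$ and thus decreases $c(F)$ by one, or a \emph{cycle step}, which is followed by contracting the newly created cycle $Q$ and thereby decreases $|V'|$ by $|Q|-1\ge 1$ while leaving $c(F)$ unchanged (since $Q$ lies inside a single component of $F$). Both $c(F)$ and $|V'|$ start at $n$, are monotonically non-increasing, and the loop terminates when $|V'|=1$, so the numbers of tree and of cycle steps are each bounded by $n-1$, giving $\bigO(n)$ grow iterations in total.

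Next I would show that each grow iteration can be executed in $\bigO(m)$ time. We maintain, for every current node of $V'$, a doubly linked adjacency list of its incident edges of $G'$ and a degree counter $\deg_F(\cdot)$; for every edge of $E'$ we keep the current cost $c'(e)$, a flag for membership in $T$, its current endpoints in $V'$, and pointers into the two adjacency lists it appears in. A single scan over all $m$ edges then recomputes $\ell_e$ and $\Delta(e)$ for every eligible edge, locates $\tilde e$, and decreases $c'(e)$ by $\ell_e\tilde\Delta$ on every eligible edge in $\bigO(m)$ time. Whether $\tilde e$ closes a cycle is decided by a DFS in $F$ between its endpoints in $\bigO(n)$ time, and the same DFS extracts the cycle $Q$ if it exists. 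Contracting $Q$ amounts to splicing the adjacency lists of its $|Q|$ nodes, after which a single traversal of the merged list deletes the newly created self-loops, rewrites the endpoint fields of the remaining edges to the identity of the new merged node, and refreshes its $\deg_F$; the cost is at most $\bigO\bigl(\sum_{v \in V'(Q)}\deg_{G'}(v)\bigr) = \bigO(m)$. Hence one grow iteration costs $\bigO(m)$, and the entire grow phase runs in $\bigO(nm)$ time.

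Finally, since exactly one edge is added to $T$ per grow iteration, $|T|=\bigO(n)$ at the start of the cleanup phase. Each of the $\bigO(n)$ cleanup steps tests whether $T-e$ is 2-edge-connected via a standard bridge-finding DFS on $T-e$ in $\bigO(|T|+n)=\bigO(n)$ time; the cleanup phase therefore contributes only $\bigO(n^2)\subseteq\bigO(nm)$ since $m\ge n$. All data structures (the adjacency lists of $G'$, the forest $F$, the list $T$, the degree counters, and the cost array $c'$) fit in $\bigO(m)$ space. The most delicate part of the argument is the bookkeeping around contractions: the combination of splicing, self-loop deletion, endpoint rewriting, and $\deg_F$-updates must be performed in a single pass so that the leaf set $L$ remains correct for the next iteration and the $\bigO(m)$ per-iteration budget is preserved.
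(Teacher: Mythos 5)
Your proof is correct and takes essentially the same approach as the paper: $\bigO(n)$ grow and cleanup iterations (you count tree steps and cycle steps via the component count of $F$ and $|V'|$, while the paper equivalently bounds $|T|$ by separating the contraction-causing edges from the remaining tree edges), $\bigO(m)$ work per grow step with adjacency-list bookkeeping, and an $\bigO(n)$ DFS-based essentiality test per cleanup step (standard bridge finding in $T-e$ rather than the paper's low-value chase), all in $\bigO(m)$ space. These differences are only implementation-level variations of the same argument.
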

\begin{proof}
 Let us first describe the used data structures.
 The graphs $G'$ and $F$ are stored naturally using the adjacency list representation.
 We store $T$ as a list (or array)
 and manage the costs~$c'$ in an array.
The space consumption of $\bigO(m)$ follows directly,
 and the initialization of all data structures takes $\bigO(m)$ time.

 We now show that there are $\bigO(n)$ grow and cleanup steps.
 Consider $T$ directly after the grow phase.
 Let $T_0 \subsetneq T$
  be the edges
  that led to a contraction,
  thus $|T_0| < n$.
 The edges $T \setminus T_0$
  form a tree
  since any cycle would lead to a contraction.

 Line~\ref{line:delta}
  takes $\bigO(m)$
  time by
  iterating over all $e \in E'$ twice,
  first to find $\tilde e$,
  and
  second to reduce the costs $c'$.
 Line~\ref{line:add}
  adds $\tilde e$ to $T$
  and to $F$ in constant time.
 For line~\ref{line:cycle},
  we can find the respective cycle in $F$
  (or determine that it does not exist)
  in $\bigO(n)$ time
  using depth-first search.
 The contractions of the (same) cycle in $G'$ and $F$ are performed
  in $\bigO(n)$ time.

 For the running time of a cleanup step (line~\ref{line:cleanup}),
  consider $T$ at the current iteration.
 Note that $T$ induces a 2-edge-connected graph.
 We can check in $\bigO(|T|)$ time
  if an edge $e \in T$
  with $V(e) = \{s, t\}$
  is essential
  using a simplified version of the classical 2-connectivity test:
 Perform a depth-first search
  in $T - e$
  starting at $s$
  to compute the DFS indices for each node.
 Using a single bottom-up traversal,
  compute $\low(u)$ for each $u \in V$
  where $\low(u)$ is the node
  with the smallest DFS index reachable
  from $u$
  by using only
  edges in the DFS tree to higher DFS indices,
  and
  at most one edge not in the DFS tree.
 Now initialize $v := t$
  and iteratively go to $v := \low(v)$ until
   $v = \low(v)$.
 Clearly,
  there is a cycle in $T - e$ containing $s$ and $t$
  if and only if $v = s$;
  otherwise $e$ is essential.
\end{proof}

Unfortunately,
 $m$ can be unbounded if $G$ has parallel edges.
We can improve the running time
 for graphs with
 $m \in \omega(n \log{n})$.
For this, we first shrink $m$ to $\bigO(n^2)$
 by removing uninteresting (parallel) edges
 in the beginning as well as after every contraction.
Then,
 we use one global value $\Gamma$
 that simulates the shrinking of all costs~$c'$
 in constant time,
 and we use $\bigO(\sqrt{m})$ binary heaps
 of size $\bigO(\sqrt{m})$
 to manage the eligible edges and their costs $c'$;
 $\bigO(n)$ many updates in these binary heaps
 lead to a dominating running time of $\bigO(n \log n)$ per grow step.

\begin{lemma}\label{lemma:time-dense}
 Algorithm~\ref{alg} can be implemented to run
  in $\bigO(m + n^2 \log{n})$ time
  and $\bigO(m)$ space
  using only arrays, lists, and binary heaps.
\end{lemma}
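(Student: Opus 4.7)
The plan is to combine the three technical enhancements sketched in the paragraph preceding the lemma. First, I would preprocess $G$ so that the edge count shrinks to $O(n^2)$. Since any 2-edge-connected subgraph uses at most two parallel edges between any pair of nodes, for every such pair we may keep only the two cheapest parallel edges. Bucket-sorting by endpoint pairs realizes this in $O(m)$ time. After every contraction in line~\ref{line:cycle}, the same deduplication is re-applied to the newly created super-node, so that each node in $V'$ retains $O(n)$ incident edges in $G'$ throughout the algorithm. Since every original edge is touched by at most two contractions (one per endpoint), the total re-pruning cost is $O(n^2)$.

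Second, I would implement the mass cost-decrease on line~\ref{line:delta} implicitly, via a single global counter $\Gamma$ that accumulates the $\tilde\Delta$ values. For every eligible edge $e$ I would store a modified key $\hat c(e)$ together with the value $\Gamma_e$ of $\Gamma$ at the most recent rekey of $e$, so that $c'(e) = \hat c(e) - \ell_e \, (\Gamma - \Gamma_e)$ is recoverable in $O(1)$. Consequently the cost decrement on line~\ref{line:delta} itself costs only $O(1)$ per grow step, and explicit rekeying is required only when $\ell_e$ actually changes, i.e., when an endpoint of $e$ transitions from leaf to non-leaf in $F$ or when $e$ is absorbed in a contraction.

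Third, I would maintain the eligible edges in a collection of binary heaps, keyed so that the within-heap order is preserved under mass cost reductions (for instance one heap per possible value of $\ell_e \in \{1,2\}$, or keyed by $\Delta(e)+\Gamma$ which is invariant under a grow step). In order to match the organization alluded to in the excerpt, I would distribute edges across $O(\sqrt{m})$ binary heaps of size $O(\sqrt{m})$; since $m=O(n^2)$ after preprocessing, every heap operation then costs $O(\log m)=O(\log n)$ while $\tilde e$ can be found in $O(\sqrt m)=O(n)$ time per grow step from the heap tops.

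The main obstacle is bounding the total number of rekey operations over the grow phase. I would argue that during the lifetime of any fixed $v \in V'$ the $F$-degree of $v$ is monotonically non-decreasing (the grow phase only adds $F$-edges until $v$ is contracted away), hence $v$ transitions out of leaf status at most once before its contraction; such a transition triggers rekeys of only the $O(n)$ edges of $G'$ incident to $v$ by the degree invariant from the first enhancement. Since at most $O(n)$ different nodes ever exist in $V'$ (the initial $n$ plus one per contraction), the total number of rekeys is $O(n^2)$, incurring $O(n^2\log n)$ heap work. Combined with the $O(m)$ initial preprocessing, the $O(n^2)$ re-pruning cost, and the $O(n\,|T|)=O(n^2)$ cost of the cleanup phase from Lemma~\ref{lemma:time}, this yields the claimed $O(m+n^2\log n)$ running time within $O(m)$ space.
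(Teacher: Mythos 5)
Your proof follows essentially the same approach as the paper: prune parallel edges to bring $m$ down to $O(n^2)$ and re-prune after contractions, use a single global offset $\Gamma$ to implement the simultaneous cost decrease implicitly, store eligible edges in $\Theta(\sqrt m)$ binary heaps keyed by a $\Gamma$-invariant quantity (the paper uses $\bar\Delta(e)=\Delta(e)+\Gamma$, equivalent to your second option), and charge the heap rekeys to the $O(n)$ leaf-status transitions and contractions, each touching $O(n)$ incident edges. One sub-argument is off, however: the claim that ``every original edge is touched by at most two contractions (one per endpoint)'' is false, since an edge incident to $u$ in $G$ has its $V'$-endpoint on that side reassigned every time the super-node currently containing $u$ participates in a further contraction, which can happen many times. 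The $O(n^2)$ re-pruning bound is nevertheless correct and is obtained in the paper differently: each edge is physically \emph{removed} at most once (giving $O(m)$ for removals), and the per-contraction scanning overhead is $O(n)$ times the $O(n)$ contractions, for $O(n^2)$; you should replace your two-touches claim with an accounting along those lines.
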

\begin{proof}
 Consider a set of $p > 2$ pairwise parallel edges.
 We call each of the $p - 2$ highest-cost edges
  \emph{futile}.
 By removing all futile edges
  we decrease the maximum edge-multiplicity in $G$ to $2$.
 This reduction can be performed in $\bigO(m + n^2)$ time (e.g., by using $n$ buckets for the neighbors of each $v \in V$).
 This guarantees
  that each node has degree $\bigO(n)$,
  hence from now on $m \in \bigO(n^2)$.
 Whenever we contract a cycle $Q$ of $n_Q$ nodes
  into a single node $q$ in line~\ref{line:cycle},
  we re-establish $\deg_{G'}(q) \in \bigO(n)$
  by removing arising self-loops and futile edges.
 Spotting futile edges requires $\bigO(n)$ time per contraction.
 Although there are
  $\bigO(n_Q^2)$ edges within $V(Q)$ (which become self-loops at $q$)
  and
  $\bigO(n_Q n)$ edges from $V(Q)$ to $V_G \setminus V(Q)$,
  the removal of the self-loops and futile edges
  takes $O(m)$ time in total (for the whole grow phase),
  since any edge is removed at most once.

 Instead of storing $c'$ directly,
  store a single global value $\Gamma$, initially being zero,
  and for each edge $e \in E'$, store a value $\bar \Delta(e)$
  that keeps the invariant
  $\bar\Delta(e) = \Delta(e) + \Gamma$
   if $e$ is eligible
  and
  $\bar \Delta(e) = c'(e)$
   if not.
 Furthermore,
  we partition the eligible edges of $E'$ arbitrarily
  into $\Theta(\sqrt{m})$ many (binary) heaps
  of size $\Theta(\sqrt{m})$,
  with $\bar\Delta$ as the priorities.
 Initially, all edges are eligible.
 The initialization takes $\bigO(m)$~time.

 By looking at the minimum edge of each heap,
  we can find and extract $\tilde e$ (see line~\ref{line:delta})
  in $\bigO(\sqrt{m})$ time.
 To decrease $c'(e)$ by $\ell_e \tilde\Delta$ for all eligible $e \in E'$,
  we have to decrease $\Delta(e)$ by $\tilde\Delta$
  which is performed by increasing $\Gamma$ by $\tilde\Delta$
  in constant time.
 Observe that this preserves the invariant:
  If $e$ is eligible,
   decreasing $c'(e)$ by $\ell_e \tilde\Delta$
   updates $\Delta(e)$ to $\frac{c'(e) - \ell_e \tilde\Delta}{\ell_e} = \frac{c'(e)}{\ell_e} - \tilde\Delta$,
   which is exactly what increasing $\Gamma$ by $\tilde\Delta$ does;
  otherwise,
   neither $\Delta(e)$ nor $\Gamma$ is changed.

 Adding edge $e$ with $V'(e) = \{s, t\}$ to $F$ (line~\ref{line:add})
  or contracting a cycle $Q$ into $q$ (line~\ref{line:cycle})
  may change the $\ell_e$ values of all $\bigO(n)$ edges in $G$
  incident to $s$ and $t$ or to $q$,
  respectively.
 If $\ell_e$ decreases to $0$,
  we remove $e$ from its heap;
  if it increases from $0$,
  we re-add it to its original heap.
 Otherwise,
 if $\ell_e$ decreases from $2$,
  we want $\Delta(e)$ to double;
 if it increases to $2$,
  $\Delta(e)$ should be halved.
 The necessary changes to $\bar \Delta(e)$ follow straightforwardly.
 Each necessary operation on a heap
  requires $\bigO(\log{n})$ time.
 We can hence perform each grow step in $\bigO(n \log n)$~time.
\end{proof}

\subsection{Analysis of the Approximation Ratio}

Let
 $\mathcal{S} := 2^V \setminus \{\varnothing, V\}$
 and
 $\mathcal{S}_{e} := \{S \in \mathcal{S} \mid e \in \delta_G(S)\}$
 for any $e \in E$.
We analyze the approximation ratio using the primal-dual method.
Hence consider the basic integer program for~2ECSS:
\begin{align}
  \textsl{minimize}\quad & \sum_{e \in E}{c(e) x_e} \\
  \sum_{e \in \delta_G(S)}{x_e} & \geq 2 & & \forall S \in \mathcal{S}
  \label{eq:lp:cut}
 \\
  x_{e} & \in \{0, 1\} & & \forall e \in E
  \label{eq:lp:integrality}
  \text{.}
\intertext{%
For its linear relaxation,
 \eqref{eq:lp:integrality} is substituted by $0 \leq x_e \leq 1$ for every $e \in E$.
The bound $x_e \leq 1$ is important since edge duplications are forbidden.
Its dual program is}
  \textsl{maximize}\quad & 2 \sum_{S \in \mathcal{S}}{y_S} - \sum_{e \in E}{z_e} \\
  \sum_{S \in \mathcal{S}_e}{y_S} - z_e & \leq c(e)
   & & \forall e \in E
   \label{eq:dual:cost}%
 \\
  y_S & \geq 0
   & & \forall S \in \mathcal{S}
 \\
  z_e & \geq 0
   & & \forall e \in E
  \text{.}
 \end{align}
We show that Algorithm~\ref{alg}
 implicitly constructs a solution $(\bar y, \bar z)$ to the dual program.
Let $(\bar y^i, \bar z^i)$ denote this dual solution computed after the $i$-th grow step.
Initially, we have the dual solution $(\bar y^0, \bar z^0)=0$.
Following this notion,
 let $F^i = (V^i, E_{F^i})$ be the forest after the $i$-th grow step,
 $L^i := \{v \in F^i \mid \deg_{F^i}(v) \leq 1\}$,
 and $\ell_e^i := |V'(e) \cap L^i|$ for each $e \in E'$.
For any node $v \in V'$,
 let $S(v)$ be the corresponding node subset of~$V$.

\begin{lemma}\label{lemma:algorithm-lp}
 The grow phase constructs a feasible solution to the dual problem implicitly as follows.
 We have, for each $v \in V'$ and $i \geq 1$,
 \begin{align*}
  \bar y_{S(v)}^{i}
   & := \begin{cases}
    \bar y_{S(v)}^{i-1} + \tilde\Delta & \text{if $v \in L^{i-1}$} \\
    \bar y_{S(v)}^{i-1} & \text{otherwise,}
   \end{cases}
 & \quad
  \bar z_e^{i}
  & := \begin{cases}
    \bar z_e^{i-1} + \tilde\Delta & \text{if $v \in L^{i-1}$ and $e \in \delta_{F^{i-1}}(v)$} \\
    \bar z_e^{i-1} & \text{otherwise.}
  \end{cases}
 \end{align*}
\end{lemma}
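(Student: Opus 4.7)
The plan is to prove feasibility of $(\bar y, \bar z)$ by induction on the step index $i$, maintaining a sharper invariant that couples the dual variables to the residual edge costs $c'$ that the algorithm is tracking. Nonnegativity is immediate because every update only ever adds the nonnegative quantity $\tilde\Delta$ to a dual variable, starting from zero.

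I would strengthen the statement to the invariant
\[ c'(e) \;=\; c(e) \;-\; \sum_{S \in \mathcal{S}_e} \bar y_S \;+\; \bar z_e \]
for every edge $e$ still present in $G'$; this is trivially true at $i = 0$. The combinatorial fact powering the inductive step is that $\{S(v) : v \in V^{i-1}\}$ is a partition of $V$. Hence, for any edge $e \in E$ whose two original endpoints lie in parts $S(u)$ and $S(w)$, the sets $S \in \mathcal{S}_e$ whose dual value is updated in step $i$ are exactly the $S(v)$ with $v \in L^{i-1} \cap \{u,w\}$; this requires $u \neq w$, for if $u = w$ then $e$ is a self-loop, has already been dropped from $E'$, and is never touched again. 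Consequently, for any edge $e$ still in $G'$ with $V'(e) = \{u,w\}$,
\[ \sum_{S \in \mathcal{S}_e} \bar y_S^{i} \;-\; \sum_{S \in \mathcal{S}_e} \bar y_S^{i-1} \;=\; \ell_e^{i-1}\,\tilde\Delta, \]
and, reading the $\bar z$-rule as applied independently for each $v \in V^{i-1}$ satisfying its antecedent, $\bar z_e$ increases by $\ell_e^{i-1}\tilde\Delta$ if $e \in F^{i-1}$ and by $0$ otherwise.

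With these two formulas in hand, the inductive step reduces to a short case split on whether $e \in F^{i-1}$ or $e \in E' \setminus F^{i-1}$. In the former case $e$ belongs to $T$, so $c'(e)$ is untouched by the algorithm, and the increases in $\sum_{S} \bar y_S$ and in $\bar z_e$ cancel. In the latter case $e$ is eligible whenever $\ell_e^{i-1} \geq 1$; the algorithm then reduces $c'(e)$ by exactly $\ell_e^{i-1}\tilde\Delta$, matching the increase in $\sum_{S} \bar y_S$ while $\bar z_e$ stays put. The choice of $\tilde e$ as the $\Delta$-minimizer guarantees $c'(e) \geq 0$, so the invariant itself delivers the dual cost constraint $\sum_{S \in \mathcal{S}_e} \bar y_S - \bar z_e \leq c(e)$.

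The main bookkeeping obstacle, which I would address last, is the edges that leave $E'$ during step $i$ (the edges of the contracted cycle, including $\tilde e$ itself). For these I plan to argue that the invariant still holds at the precise moment they become self-loops, which already implies the dual inequality for them; thereafter no $S(v)$ for any current $v \in V^{j}$ with $j \geq i$ lies in $\mathcal{S}_e$, and $e$ never re-enters $F$, so neither $\sum_{S} \bar y_S$ nor $\bar z_e$ is ever touched again and feasibility persists for all time. This covers every edge and concludes that $(\bar y, \bar z)$ is a feasible dual solution.
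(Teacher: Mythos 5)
Your proposal is correct and follows essentially the same strategy as the paper: both maintain the invariant $c'(e) = c(e) - \sum_{S\in\mathcal{S}_e}\bar y_S + \bar z_e$, both track how the $\bar y$/$\bar z$ updates are mirrored by the algorithm's reduction of $c'$ by $\ell_e^{i-1}\tilde\Delta$ for eligible edges, and both derive feasibility from $\tilde e$ being the $\Delta$-minimizer. Your version is a bit more careful in two places the paper handles tersely or implicitly --- explicitly aggregating the $\bar y$-increase over both possible leaf endpoints via the partition structure of $V'$, and explicitly arguing that once an edge becomes a self-loop and leaves $G'$ its dual constraint can never be violated again --- but these are rigor refinements, not a different route.
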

\begin{proof}
 Let $\bar{c}^i$ be $c'$ after the $i$-th grow step.
 Initially, $(\bar y^0, \bar z^0) = 0$
  matches
  the initialization $\bar{c}^0 := c$.
 Consider the $i$-th grow step.
 We show that
  $(\bar y^i, \bar z^i)$ satisfies
  (i)~$\bar{c}^i(e) = c(e) - \sum_{S \in \mathcal{S}_e}{\bar y_S^i} + \bar z_e^i$,
  i.e., the right-hand side minus the left-hand side of~\eqref{eq:dual:cost},
  and (ii)~$\bar{c}^i(e) \geq 0$,
  i.e., the constructed solution is feasible.

 (i) The claim is trivial for $v \notin L^{i-1}$
  since the corresponding variables do not change.
 Consider $v \in L^{i-1}$
  and any $e \in \delta_G(S(v))$.
 By $\bar y_{S(v)}^i = \bar y_{S(v)}^{i-1} + \tilde\Delta$,
  we have
  $\sum_{S \in \mathcal{S}_e}{\bar y_S^{i-1}} - \bar z_e^{i-1}
    = \sum_{S \in \mathcal{S}_e}{\bar y_S^i} - \bar z_e^{i-1} - \tilde\Delta$
  for the left-hand side of~\eqref{eq:dual:cost}.
 By the definition of $\bar z_e^i$,
  this coincides with
  $\sum_{S \in \mathcal{S}_e}{\bar y_S^i} - \bar z_e^i$
   if $e \in \delta_F(v)$,
  and with
   $\sum_{S \in \mathcal{S}_e}{\bar y_S^i} - \bar z_e^i - \tilde\Delta$
   otherwise.
 This change is reflected exactly
  by $\bar{c}^i(e) := \bar{c}^{i-1}(e) - \ell_e^{i-1} \tilde\Delta$
  (that is, decreasing $c'$ by $\tilde\Delta$ for each leaf incident to $e$ in~$F$)
  if and only if $e$ is eligible.

 (ii) Assume by contradiction
  that
  there is an $e \in E$
  with $\bar{c}^{i-1}(e) \geq 0$ and $\bar{c}^i(e) < 0$.
 Note that $\bar{c}^{i-1}(\tilde e) = \ell_{\tilde e}^{i-1} \tilde\Delta$.
 By $\bar{c}^i(e) := \bar{c}^{i-1}(e) - \ell_e^{i-1} \tilde\Delta < 0$
  we get $\bar{c}^{i-1}(e) < \ell_e^{i-1} \tilde\Delta$,
  which contradicts the choice of $\tilde e$.
\end{proof}

Let $\bar T$ be the solution edges remaining after the cleanup phase.
Let $(V^i, \bar T^i)$
 be the graph on nodes $V^i$
 that consists of all edges in $\bar T$
 without self-loops.
In other words,
 $\bar T^i$
 are the edges corresponding to $\bar T$
 when mapped into the node partition defined by $F^i$.
We partition $L^i$ into
 the set
  $L_0^i := \{v \in L^i \mid \deg_{F^i}(v) = 0\}$
  of isolated nodes in $F^i$,
 the set
  $L_1^i := \{v \in L^i \mid \deg_{F^i}(v) = 1, \delta_{F^i}(v) \subseteq \bar T^i\}$
  of degree-1 nodes in $F^i$
  incident to an edge in the contracted solution $\bar T^i$,
 and
 the set
  $L_2^i := \{v \in L^i \mid \deg_{F^i}(v) = 1, \delta_{F^i}(v) \cap \bar T^i = \varnothing\}$
  of the degree-1 nodes in $F^i$
  incident to an edge in $E^i \setminus \bar T^i$,
  i.e., not being in the contracted solution.

\begin{lemma}
 For each $i$,
  every edge $e \in \bar T^i \setminus E_{F^i}$ is essential in $(V^i, \bar T^i \cup E_{F^i})$.
 \label{lemma:essential}
\end{lemma}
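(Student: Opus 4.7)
First I would observe that $e \in \bar T^i \setminus E_{F^i}$ must have been added to $T$ at some grow step $j > i$, since any edge of $T$ present already by step $i$ is either a self-loop in $V^i$ (hence removed from $\bar T^i$) or still a forest edge in $F^i$ (hence in $E_{F^i}$). Because the cleanup phase retained $e$, at the iteration when $e$ was considered the current edge set $T_e$ is such that $T_e - e$ is not 2-edge-connected; a short bookkeeping argument identifies $T_e = T^j \cup \bar T$, where $T^i$ denotes the edges of $T$ after the $i$-th grow step. I would then pick a nontrivial cut $(X, V \setminus X)$ with $|\delta_{T_e - e}(X)| \leq 1$ and combine $\bar T \subseteq T_e$ with the 2-edge-connectivity of $\bar T$ to obtain $\delta_{T_e}(X) = \{e, f\}$ for some $f \in \bar T$. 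Since $e$ was added only at step $j > i$, we have $e \notin T^i$, and hence $\delta_{T^i}(X) \subseteq \{f\}$.

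The central structural fact, which I would prove by induction on the grow step, is the sub-claim: for every $v \in V^i$, the induced subgraph $T^i[S(v)]$ is 2-edge-connected. A non-contracting step leaves the invariant unchanged; a step that contracts a cycle $Q$ in $F$ into a supernode $q$ yields $T^i[S(q)] \supseteq \bigcup_{w \in V(Q)} T^{i-1}[S(w)] \cup E(Q)$, where by induction each piece $T^{i-1}[S(w)]$ is 2-edge-connected and the cycle $E(Q)$ glues these pieces together with a 2-edge-connected quotient. The standard fact that a graph partitioning into 2-edge-connected induced subgraphs whose quotient graph is 2-edge-connected is itself 2-edge-connected then completes the inductive step. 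Combined with $|\delta_{T^i}(X)| \leq 1$, the sub-claim forces the cut $(X, V \setminus X)$ to respect the partition $\{S(v) : v \in V^i\}$: any straddling $S(v)$ would contribute at least two edges of $T^i[S(v)]$ to $\delta_{T^i}(X)$, a contradiction.

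Once the cut respects the partition, I would set $Y := \{v \in V^i : S(v) \subseteq X\}$ and lift. The two edges of $\delta_{\bar T}(X) = \{e, f\}$ are non-self-loops in $V^i$, so $\delta_{\bar T^i}(Y) = \{e, f\}$; any additional edge in $\delta_{E_{F^i}}(Y)$ lies in $E_{F^i} \subseteq T^i$ and therefore also in $\delta_{T^i}(X) \subseteq \{f\}$. Hence $|\delta_{\bar T^i \cup E_{F^i}}(Y)| = 2$, so removing $e$ leaves $f$ as a bridge and $e$ is essential in $(V^i, \bar T^i \cup E_{F^i})$. The main obstacle I anticipate is the interplay between the sub-claim and the cleanup: it is crucial to work with $T_e = T^j \cup \bar T$ rather than with $\bar T$ alone, since only $T_e$ gives the tight bound $|\delta_{T^i}(X)| \leq 1$ that allows the sub-claim to rule out straddling supernodes.
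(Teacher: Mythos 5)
Your proof is correct, but it takes a genuinely different route from the paper's. The paper argues by \emph{backward} induction over the grow steps: starting from the final single-node forest (where the claim is vacuous), it transfers essentiality back through each contraction via the observation that, for a cycle $Q$ in a 2-edge-connected graph, an edge off $Q$ is essential before contracting $Q$ if and only if it is essential afterwards; the newly inserted edge is then handled by noting that its own cleanup check would have discarded it were it nonessential. You instead go directly to the cleanup check at the later step $j>i$ at which $e$ itself was added, extract the resulting two-edge cut $\{e,f\}$ of $T^j\cup\bar T$, and lift that cut to $V^i$ using the sub-claim---established by \emph{forward} induction together with the standard gluing lemma---that each supernode $S(v)$ spans a 2-edge-connected subgraph of $T^i$. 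Both proofs in fact rest on this invariant (the paper records it informally among the algorithm's invariants preceding the lemma; you re-derive it), but your route makes the dependence explicit and exhibits a concrete witness cut for essentiality, whereas the paper's backward induction is terser and leaves the ``uncontraction'' reasoning behind its final sentence implicit.
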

\begin{proof}
 First observe that
  for a cycle~$Q$ in a 2-edge-connected graph~$H$,
  an edge $e \notin Q$
  is essential in $H$ if and only if $e$ is essential in $H$ after contracting $Q$.

 The claim holds trivially for the single-node forest (i.e., after the last grow step).
 Consider any $i$
  where we have that every edge $e \in \bar T^i \setminus E_{F^i}$
  is essential in $(V^i, \bar T^i \cup E_{F^i})$.
 In the $i$-th grow step,
  we insert an edge $e \in \bar T^{i-1} \setminus E_{F^{i-1}}$
  into $F^{i-1}$ or possibly contract an emerging cycle.
 In any way,
  $e \notin \bar T^i \setminus E_{F^i}$.
 By induction,
  all edges $\bar T^{i-1} \setminus (E_{F^{i-1}} \cup \{e\})$ are essential.
 If $e$ was nonessential,
  the cleanup step corresponding to the $i$-th grow step would remove $e$.
\end{proof}

\begin{lemma}[Leaf-Degree Property]\label{lemma:degree-property}%
 We have
 $
  \sum_{v \in L^i}{\deg_{\bar T^i}(v)}
   \leq
   3 (|L^i| + |L^i_0|) + |L^i_1|
 $.
\end{lemma}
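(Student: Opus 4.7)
My plan is to decompose the left-hand side according to whether an incident solution edge is a forest edge or not. Write $d_F(v)$ for the number of $\bar T^i$-edges at $v$ that belong to $E_{F^i}$ and $d_E(v)$ for the number that do not. By the definitions of $L^i_0, L^i_1, L^i_2$ we have $d_F(v) = 1$ exactly when $v \in L^i_1$ and $d_F(v) = 0$ otherwise, so $\sum_{v \in L^i} d_F(v) = |L^i_1|$. The claim is therefore equivalent to showing that $\sum_{v \in L^i} d_E(v) \le 3(|L^i| + |L^i_0|)$, i.e., to bounding only the \emph{non-forest} contribution.

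By Lemma~\ref{lemma:essential}, each edge contributing to $d_E(v)$ is essential in $H := (V^i, \bar T^i \cup E_{F^i})$, and hence admits a \emph{2-cut witness}: a set $S_e \subsetneq V^i$ with $\delta_H(S_e) = \{e, f_e\}$ for some companion edge $f_e$. The next step is to exploit these witnesses to organize the essential edges around a leaf $v$ into a small number of objects and then charge pairs $(v,e)$ against those objects. Concretely, I would attempt to build an injection from pairs $(v, e)$ with $v \in L^i$ and $e \in \delta_{\bar T^i}(v) \setminus E_{F^i}$ into a combinatorial set of size at most $3(|L^i| + |L^i_0|)$, using the witnesses to pair up ``incoming'' and ``outgoing'' essential edges at $v$ along minimal paths through the tight 2-cuts. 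The macros \texttt{\textbackslash badcycle} and \texttt{\textbackslash badpath} declared in the preamble strongly suggest that the formal proof classifies each such pair by a \emph{bad cycle} or \emph{bad path} of essential edges attached at $v$, and the three-per-leaf coefficient should arise because each leaf can be the endpoint or an inner node of at most a bounded number of such structures.

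The asymmetric bonus term $+3|L^i_0|$ is what I expect to be delicate: isolated leaves correspond to already-fully-contracted 2-edge-connected chunks that are not yet tied into the surrounding forest, so their essential boundary can plausibly be twice as large. My plan is to treat them as a separate case where the witness cuts around $v$ partition $\delta_{\bar T^i}(v)$ into groups of size at most two (paired by shared witnesses), giving a cleaner but larger bound of $6$ on the average essential degree. For $v \in L^i_1 \cup L^i_2$, the unique forest edge at $v$ breaks at least one of those pairings, reducing the budget to $3$ or $4$ essential edges per leaf on average.

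The principal obstacle will be ensuring the injectivity of the charging scheme when witness cuts of different essential edges interleave; two distinct pairs $(v,e)$ and $(v',e')$ might otherwise be charged to the same bad cycle/path. Making this precise requires extra notation distinguishing the role of each endpoint of a witness cut and tracking the grow-phase history that guarantees certain nodes are currently leaves. I expect this is the reason the authors defer the argument to its own section: the technical combinatorics of the injection, rather than any single conceptual step, is what consumes the proof.
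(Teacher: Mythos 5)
Your opening reduction is correct: with $d_F(v)$ and $d_E(v)$ denoting the forest and non-forest $\bar T^i$-degree of a leaf $v$, we indeed have $\sum_{v\in L^i} d_F(v)=|L^i_1|$, so the claim reduces to bounding $\sum_{v\in L^i} d_E(v)$. You also correctly identify Lemma~\ref{lemma:essential} as the crucial lever. But from that point onward the proposal is a plan, not a proof, and the plan diverges from the paper's actual mechanism in a way that leaves the hard part untouched.

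The paper does not argue via 2-cut witnesses at all. Instead it fixes an \emph{ear decomposition} $O_0,O_1,\ldots$ of the (minimal, 2-edge-connected) edge set $E'$ (cf.~$\bar T$), and treats it as a process over time: whenever an ear $O_t$ is added, its inner leaves receive exactly degree $2$ (accounted for by a flat $+2$ charge), and the increments at ear \emph{endpoints} are what must be redistributed. These endpoint events are classified into three situations (whether the incident $E'$-edge is in $F$, or not in $F$ with its forest component already discovered or still undiscovered), and two separate charging maps are built. Injectivity is enforced through \emph{cycle witnesses} and \emph{path witnesses}: explicit cycles or paths through $E'\cup F$ that, if two charges collided, would exhibit a nonessential edge in $E'\setminus F$, contradicting the hypothesis from Lemma~\ref{lemma:essential}. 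The \texttt{\textbackslash badcycle}/\texttt{\textbackslash badpath} macros you noticed name precisely these contradiction objects, but they live in the ear-decomposition framework, not in a 2-cut framework. Your guess that $L_0$ leaves should receive a budget of about $6$ is also off; the argument actually bounds the per-leaf charge by $4$ on $L_0\cup L_1$ and by $3$ on $L_2$, and the stated inequality already contains slack.

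The genuine gap is that you never construct the charging scheme or the target set of your injection, and you never explain how interleaving cuts are disentangled. That is precisely where the difficulty lies, and it is also where the paper deviates from your sketch: the ordering by ear-insertion time (and, inside the proof of Lemma~\ref{lemma:charge-u-and-d}, the notions $C_i^*$, $\sidx{i}$, and the component-hopping recursion building $\mapU$) is what makes the injection arguments go through. Without an analogue of that temporal structure, the 2-cut witnesses at a leaf can overlap in ways your sketch does not rule out, and the claimed injection does not follow.
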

Note that Lemma~\ref{lemma:essential} is a prerequisite for Lemma~\ref{lemma:degree-property}.
The proof of Lemma~\ref{lemma:degree-property} is highly non-trivial and may be of independent interest.
We thus defer its presentation,
 together with all the required notations and further definitions,
 to Section~\ref{section:degree-property}.
There, we will restate the lemma (including all prerequisites) in more general terms as Theorem~\ref{thm:degree-property}.

\begin{lemma}\label{lemma:approx-ratio}%
 The solution obtained by Algorithm~\ref{alg} is within three times the optimum.
\end{lemma}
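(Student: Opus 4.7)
The plan is to use LP duality: since $(\bar y,\bar z)$ is dual feasible by Lemma~\ref{lemma:algorithm-lp}, we have $\mathrm{OPT}\ge 2\sum_{S\in\mathcal S}\bar y_S-\sum_{e\in E}\bar z_e$, so it suffices to prove
\[
 c(\bar T)\ \le\ 3\Bigl(2\sum_{S\in\mathcal S}\bar y_S-\sum_{e\in E}\bar z_e\Bigr).
\]
I would match both sides of this inequality increment by increment across the grow steps, with the Leaf-Degree Property (Lemma~\ref{lemma:degree-property}) supplying the only nontrivial bound per step.

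The first step is to establish complementary slackness on $\bar T$: for every $e\in\bar T$,
\[
 c(e)=\sum_{S\in\mathcal S_e}\bar y_S-\bar z_e.
\]
By Lemma~\ref{lemma:algorithm-lp} this equality holds at the instant $e$ is added to $T$, since then $\bar c^{j}(e)=0$. Thereafter one of two things happens. If $e$ is absorbed into a contracted node it becomes a self-loop in $G'$ and is removed from $F$; from then on $\bar z_e$ is frozen, and no $S(v)$ for $v\in V'$ separates the original endpoints of $e$, so the $\mathcal S_e$-sum is frozen as well. If instead $e$ stays in the forest, every subsequent increase of $\bar y_{S(v)}$ at a leaf endpoint $v$ of $e$ is matched by the equal increase of $\bar z_e$ prescribed in Lemma~\ref{lemma:algorithm-lp}. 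Either way tightness is preserved throughout the grow phase and trivially survives cleanup, which does not touch the dual.

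Given tightness, I would expand
\[
 c(\bar T)=\sum_{e\in\bar T}\Bigl(\sum_{S\in\mathcal S_e}\bar y_S-\bar z_e\Bigr)=\sum_{S\in\mathcal S}\bar y_S\cdot|\delta_{\bar T}(S)|-\sum_{e\in\bar T}\bar z_e
\]
and track the change of both sides of the target inequality in grow step $i$. Using $|\delta_{\bar T}(S(v))|=\deg_{\bar T^{i-1}}(v)$, that only non-isolated leaves of $F^{i-1}$ are incident to an $F^{i-1}$-edge, and that among those only the leaves in $L_1^{i-1}$ are incident to an edge of $F^{i-1}$ that also lies in $\bar T$, the per-step primal and dual increments come out to
\[
 \tilde\Delta\Bigl(\sum_{v\in L^{i-1}}\deg_{\bar T^{i-1}}(v)-|L_1^{i-1}|\Bigr)\qquad\text{and}\qquad \tilde\Delta\bigl(|L^{i-1}|+|L_0^{i-1}|\bigr),
\]
respectively.

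The desired per-step comparison (primal increment) $\le 3\cdot$ (dual increment) then reduces, after cancelling $\tilde\Delta$, to exactly the Leaf-Degree Property of Lemma~\ref{lemma:degree-property}. Summing over all grow steps and invoking LP duality yields $c(\bar T)\le 3(2\sum_S\bar y_S-\sum_e\bar z_e)\le 3\cdot\mathrm{OPT}$. The main obstacle in this plan, apart from the already deferred proof of the Leaf-Degree Property itself, is the tightness bookkeeping under contractions: one must verify carefully that Lemma~\ref{lemma:algorithm-lp} only updates $y$ on sets $S(v)$ indexed by current partition nodes, and that the removal of self-loops in line~\ref{line:cycle} freezes $\bar z_e$ at exactly the moment $e$ stops contributing to any $\mathcal S_e$-sum.
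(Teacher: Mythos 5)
Your proposal is correct and follows essentially the same route as the paper's proof: LP weak duality, complementary slackness on $\bar T$, and per-grow-step increment tracking of both sides of the target inequality, with the Leaf-Degree Property supplying exactly the per-step bound. The only cosmetic difference is that you keep $\sum_{e\in\bar T}\bar z_e$ on the left while the paper moves it to the right; the tightness bookkeeping you flag as a potential gap is already handled by invariant~(i) of Lemma~\ref{lemma:algorithm-lp}, which holds for every~$i$ regardless of contractions or self-loop removals.
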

This result is tight as can be seen in Figure~\ref{fig:tight}.
\begin{proof}
 Let $(\bar y, \bar z)$ be the \emph{dual} solution Algorithm~\ref{alg} produces implicitly, as described by Lemma~\ref{lemma:algorithm-lp},
  with dual solution value $B$.
 On the other hand, $\bar T$ is called our \emph{primal} solution.
 Note that for all edges $e \in \bar T$, we have $c'(e) = 0$,
  i.e.,
  their constraints \eqref{eq:dual:cost} are tight.
 Hence we can rewrite our primal solution value
 \[
  c(\bar T) = \sum_{e \in \bar T}{c(e)}
   = \sum_{e \in \bar T}\Big(\sum_{S \in \mathcal{S}_e}{\bar y_S} - \bar z_e\Big)
   = \sum_{S \in \mathcal{S}}{\deg_{\bar T}(S) \bar y_S} - \sum_{e \in \bar T}{\bar z_e}
   \text{.}
 \]
 We prove a $3$-approximation by showing that
  $c(\bar T) \leq 3 B$,
  i.e.,
 \[
  \sum_{S \in \mathcal{S}}{\deg_{\bar T}(S) \bar y_S} - \sum_{e \in \bar T}{\bar z_e}
   \, \leq \,
   3\Big(\sum_{S \in \mathcal{S}}{2 \bar y_S} - \sum_{e \in E}\bar z_e\Big)
  \text{,}
 \]
 or equivalently
 \begin{equation}
  \sum_{S \in \mathcal{S}}{\deg_{\bar T}(S) \bar y_S}
   \, \leq \,
   6 \sum_{S \in \mathcal{S}}{\bar y_S}
   - 2 \sum_{e \in \bar T}\bar z_e
   - 3 \!\! \sum_{e \in E \setminus \bar T} \!\!\! \bar z_e
   \text{.}
  \label{eq:primaldual:essence}%
 \end{equation}
 Observe that~\eqref{eq:primaldual:essence} trivially holds initially
  since all values $(\bar y^0, \bar z^0)$ are zero.
 We show that \eqref{eq:primaldual:essence} holds after each grow step.
 Assume it holds for $(\bar y^i, \bar z^i)$.
 We look at the increase
  of the left-hand side
  and right-hand side
  of \eqref{eq:primaldual:essence}
  when adding an edge to $F^i$.
 By Lemma~\ref{lemma:algorithm-lp},
  we have $\bar y^{i+1}_{S(v)} = \bar y^i_{S(v)} + \tilde\Delta$ for all $v \in L^i$
  and
  $\bar z^{i+1}_e = \bar z^i_e + \tilde\Delta$ for all $e \in \delta_F(L^i_1 \cup L^i_2)$.
 Hence
  it remains to show that
  \[
  \sum_{v \in L^i}{\deg_{\bar T^i}(v) \tilde\Delta}
   \; \leq \;
   6 \sum_{v \in L^i}\tilde\Delta
   - 2 \sum_{v \in L^i_1}\tilde\Delta
   - 3 \sum_{v \in L^i_2}\tilde\Delta
  \]
  holds.
 After dividing by $\tilde\Delta$
  and since $|L^i| = |L^i_0| + |L^i_1| + |L^i_2|$,
  the right-hand side
  simplifies to
  $6 |L^i| - 2 |L^i_1| - 3 |L^i_2|
   =
   6 |L^i_0| + 4 |L^i_1| + 3 |L^i_2|
   =
   3 (|L^i| + |L^i_0|) + |L^i_1|$,
  i.e., we have Lemma~\ref{lemma:degree-property}.
\end{proof}

\section{The Leaf-Degree Property (Proof of Lemma~\ref{lemma:degree-property})}
\label{section:degree-property}

\tikzset{%
  general/.style={%
    draw=black,
    fill=black,
    inner sep=1pt,
    circle,
  },
  gedge/.style={%
    draw=black!70!yellow,
  },
  subgraph/.style={%
    gedge,
    fill=lightgray!40,
    ellipse,
  },
  fedge/.style={%
    very thick,
    draw=black!60!blue,
  },
  not in solution/.style={%
    densely dashed,
  },
  leaf/.style={%
    fedge,
    fill=white,
    inner sep=1mm,
    rectangle,
  },
  filledleaf/.style={%
    fedge,
    fill,
    inner sep=1mm,
    rectangle,
  },
  pics/branch/.style={%
    code={
      \node (#1-n1) [filledleaf] at (1,0) {};
      \node (#1-n2) [filledleaf] at (2,0) {};
      \draw [gedge]
        (0,0) edge [bend left=13] (#1-n1)
        (#1-n1) edge[bend left=13] (0,0)
        (#1-n1) edge[bend right=13] (#1-n2);
      \draw [fedge]
        (#1-n2) edge[bend right=13] (#1-n1);
    }
  },
}
\begin{figure}
 \centerline{%
 \begin{tikzpicture}[baseline=0]
  \pic at (0,0) {branch=a};
  \pic [rotate=42] at (0,0) {branch=b};
  \pic [rotate=138] at (0,0) {branch=d};
  \pic [rotate=180] at (0,0) {branch=e};
  \node at (0,0.8) {$\ldots$};
  \node at (0,1.2) {$k$ many};
  \coordinate (c-n1) at (-0.75,1.2);
  \coordinate (c-n2) at (0.75,1.2);
  \draw [gedge, not in solution] (a-n1) edge (b-n2);
  \draw [gedge, not in solution] (b-n1) edge (c-n2);
  \draw [gedge, not in solution] (c-n1) edge (d-n2);
  \draw [gedge, not in solution] (d-n1) edge (e-n2);
  \node [filledleaf] at (0,0) {};
  \node at (6,0.6) {%
    \begin{minipage}{45mm}
      $\sum_{v \in L}{\deg_{E'}(v)} = 8k$\\[1ex]
      $|L_0| = 1, \; |L_1| = 2k, \; |L_2| = 0$\\[1ex]
      $3 \, (|L| + |L_0|) + |L_1| = 8k + 6$
    \end{minipage}};
 \end{tikzpicture}}
 \caption{\label{fig:tight}%
  An example showing
   tightness for the approximation ratio as well as for the leaf-degree property.
  For the approximation ratio, consider
   all thick edges' costs to be~$0$,
   all solid thin edges' costs~$1$,
   and all dashed edges' costs~$1+\varepsilon$
   for an arbitrary small~$\varepsilon > 0$.
  The algorithm's solution consists of all solid edges of total cost~$3k$.
  The optimum solution
   is the Hamiltonian cycle consisting of
   all dashed edges,
   all thick edges,
   and two solid thin edges to connect the center node.
  Its total cost is~$
   k + 1 + (k-1) \varepsilon
  $.
  The ratio~$\frac{3k}{k + 1 + (k-1) \varepsilon)}$
   approaches~$3$
   for~$k \to \infty$.
  For the leaf-degree property,
   all edges are in~$E$,
   thick edges in~$F \subsetneq E$,
   solid edges in~$E' \subsetneq E$.
 }
\end{figure}

This section is dedicated to show the following theorem.
The theorem is a reformulation of Lemma~\ref{lemma:degree-property}
 in terms that are totally independent of the setting and notation
 used in the previous section.

\begin{theorem}[Reformulation of Lemma~\ref{lemma:degree-property}]\label{thm:degree-property}
Let $G = (V, E)$ be a 2-edge-connected graph and
 $E'$ a minimal 2-edge-connected spanning subgraph in $G$.
Let $F \subseteq E$
 be an edge set describing a (not necessarily spanning) forest in $G$
 such that
 each edge $e \in E' \setminus F$
 is essential in $E' \cup F$.
Let $L_0 := V \setminus V(F)$,
 $L_1 := \{v \in V(F) \mid \deg_F(v) = 1, \delta_F(v) \subseteq E'\}$,
 $L_2 := \{v \in V(F) \mid \deg_F(v) = 1, \delta_F(v) \cap E' = \varnothing\}$,
and $L := L_0 \cup L_1 \cup L_2$.

Then we have
 $
  \sum_{v \in L}{\deg_{E'}(v)}
   \; \leq \;
   3 \, (|L| + |L_0|) + |L_1|
 $.
\end{theorem}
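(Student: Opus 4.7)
I would prove the bound by a token-charging argument combined with a contradiction against minimality or essentialness. First I observe that $3(|L|+|L_0|) + |L_1| = 6|L_0| + 4|L_1| + 3|L_2|$, so the inequality says each leaf in $L_0$, $L_1$, $L_2$ can ``pay for'' on average at most $6$, $4$, $3$ incidences of $E'$-edges respectively. Splitting $E'$ into $A := E' \setminus F$ and $B := E' \cap F$, and noting that each $v \in L_1$ contributes exactly one $B$-edge while each $v \in L_0 \cup L_2$ contributes none, the bound simplifies to $\sum_{v \in L} \deg_A(v) \leq 3|L| + 3|L_0|$: each $L_0$-leaf has a budget of $6$ for $A$-edges, and each $L_1$- or $L_2$-leaf a budget of $3$. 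This cleaner formulation invites either a direct charging scheme or a contradiction argument that exhibits a forbidden combinatorial structure when the inequality fails.

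The structural lever is that every $e \in A$ is essential in $H := E' \cup F$: removing $e$ must leave a bridge $e'$, so $\{e, e'\}$ is a $2$-edge-cut of $H$; moreover $F$ is a forest, so each $v \in L$ has at most one $F$-incident edge. Starting from any overflowing incidence at a leaf $v$, I would trace these $2$-edge-cuts through $H$, seeking a companion leaf $w$ with slack budget to which the excess can be matched. When no such companion exists, the traced route should close into a cycle or terminate in a path whose presence in $H$ would let one delete an $A$-edge without destroying $2$-edge-connectivity, contradicting essentialness: a ``bad cycle'' or ``bad path'' witness. I expect the proof to organize these contradictions into a handful of situations (three seems natural given the machinery set up in the preamble), with explicit maps from overflow incidences to witness structures, and then to verify case by case that each overflow yields a witness of one of these types.

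The principal obstacle is that the bound is inherently non-local. As the tight example of Figure~\ref{fig:tight} shows, a single $L_0$-leaf can have $E'$-degree $2k$ while being individually allotted only $6$ tokens, so its surplus of $2k-6$ must be routed to up to $k$ distant leaves of $L_1 \cup L_2$. Making this routing precise---defining the witness paths and cycles unambiguously so that no leaf is double-charged, distinguishing which ear-endpoint and which passage configuration each excess corresponds to, handling parallel edges in the multigraph $G$, and treating every interaction of the forest $F$ with the minimally augmenting edge set $A$---is where nearly all the technical difficulty lies. This is precisely why the authors strip away the primal-dual context and re-present the lemma as the purely combinatorial Theorem~\ref{thm:degree-property} in its own section.
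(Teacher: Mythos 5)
Your opening reduction is correct and matches the arithmetic the paper itself performs: since every $v \in L_1$ has exactly one $F$-edge and it lies in $E'$, while leaves in $L_0 \cup L_2$ have none, the claim is indeed equivalent to $\sum_{v\in L}\deg_{E'\setminus F}(v) \leq 3|L| + 3|L_0|$, and the observation that essentialness of each $e \in E'\setminus F$ in $E'\cup F$ produces a $2$-edge-cut $\{e,e'\}$ is a sound point of departure. Beyond this, however, the proposal contains no proof. The entire content of the theorem is the construction of a globally consistent charging scheme together with the verification that no leaf exceeds its budget, and precisely this part is deferred in your text (``the traced route should close into a cycle or terminate in a path \dots'', ``I expect the proof to organize these contradictions into a handful of situations''). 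Nothing specifies which leaf is charged for which incidence, why the charges accumulating at a fixed leaf stay within $6$ (resp.\ $3$), or why a failed search for a ``companion leaf'' actually yields an edge of $E'\setminus F$ whose removal preserves 2-edge-connectivity.

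What is missing is exactly the machinery of Section~\ref{section:degree-property}: an ear decomposition of $E'$, which gives every leaf a baseline charge of $2$ (its two incidences inside the ear that explores it) and a time order on the remaining incidences, namely those at ear endpoints that are leaves; the classification of these into situations \sitE, \sitU, \sitD (Lemmata~\ref{lemma:charge-e} and~\ref{lemma:charge-u-and-d}); and, for the hard situations, the injection $\mapD$ (Lemma~\ref{lemma:situation-a-mapping}, built inductively via Lemmata~\ref{lemma:situation-a-unexplored-leaf}--\ref{lemma:situation-a-component-mapable}) and the almost-injective map $\mapU$ (Lemma~\ref{lemma:situation-e-mapping}, built by a recursive component-hopping algorithm), where every correctness step is a cycle- or path-witness contradiction with essentialness. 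Your plan to ``trace the $2$-edge-cuts seeking a companion leaf with slack'' is a reasonable intuition for these maps, but without a time order and without the injectivity arguments there is nothing to prevent several overflow incidences from being routed to the same leaf---the very failure mode those lemmata exclude, and, as your own discussion of the tight example acknowledges, the place where all the difficulty of the non-local routing resides. As it stands, the proposal is a proof plan, not a proof.
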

Figure~\ref{fig:tight} illustrates an example
 where the left-hand side approaches the right-hand side.
Throughout this section,
 we will use the following convention:
We call the nodes in $L$ \emph{leaves};
 they are either $L_1 \cup L_2$, degree-1 nodes in $F$, or
 $L_0$, isolated nodes w.r.t.~$F$.
This is quite natural
 since $E$ and $E'$ do not contain any degree-1 nodes.
For any subforest $F' \subseteq F$,
 let $L(F') := L \cap V(F')$.
We use the term \emph{component}
 for a connected component in $F$
 since $E$ and $E'$ consist of one connected component only.
Hence these two terms only make sense in the context of $F$.

We consider an \emph{ear decomposition} of $E'$, that is, we consider
an ordered partition of $E'$ into disjoint edge sets
$O_0, O_1, \ldots$ where $O_0$ is a simple cycle and where
$O_t$ for $t \geq 1$ is a simple $u$-$v$-path with
$V(O_t) \cap \bigcup_{i=0}^{t-1}{V(O_i)} = \{u, v\}$.
Such an ear decomposition exists since $E'$ is 2-edge-connected.
Note that
 every ear $O_t$ has at least one inner node
 since it would otherwise only consist of a single edge
 which would be nonessential in $E'$.
Let $E_t' := \bigcup_{i=0}^{t-1}{O_i}$
 be the subgraph of $E'$
 that contains of the first $t$ ears
 of the ear sequence.

We interpret the ear decomposition as a sequential procedure.
We say $O_t$ is added to $E_t'$ at \emph{time} $t$.
For $t_2 > t_1 \geq 1$,
 the ear $O_{t_1}$ appears \emph{earlier} than $O_{t_2}$,
 and $O_{t_2}$ appears \emph{later} than $O_{t_1}$.
At any time $t$,
we call a node $v$ \emph{explored}
 if $v \in V(E_t')$,
 otherwise it is \emph{unexplored};
we call a component \emph{discovered}
 if it contains an explored node,
 otherwise it is \emph{undiscovered}.
Observe that the inner nodes $v$ of $O_t$ are not yet explored at time~$t$.
We define $\theta(v):=t$ as the time when $v$ will become explored. Clearly, we have $\theta(v):=0$ for all nodes $v\in O_0$.

The basic idea of our proof is to
 use the ear sequence
 to keep track (over time $t$) of $\deg_{E'_t}(v)$ for $v \in L$
 via a charging argument.
Consider any $t \geq 1$.
An inner (and thus unexplored) node of $O_t$ might be in $L$.
Every such leaf has a degree of $2$ in $E_{t+1}'$.
However, the endpoints of $O_t$ may be explored leaves
 whose degrees increase in $E_{t+1}'$.
We tackle this problem by
 assigning this increase to other leaves
 and making sure that the total assignment to each leaf is bounded.

Let $\Pi$ %
 be the set of all edges in $E'$
 that are incident to a leaf that is simultaneously an endpoint of some ear $O_t$.
We denote the edges in $\Pi$ by
 $\pi_1, \ldots, \pi_{|\Pi|}$
 in increasing time of their ears,
 i.e.,
  for
  $\pi_i \in O_t, \pi_j \in O_{t'}$
  with
  $i < j$
  we have
  $t \leq t'$.
We say $i$ is the \emph{index} of edge $\pi_i \in \Pi$.
To be able to refer to the nodes $V(\pi_i) =: \{a_i, b_i\}$ by index,
 we define
 $a_i$ as the endpoint
 and
 $b_i$ as the inner node
 of the ear containing $\pi_i$.
Note that
 there might be distinct $\pi_i, \pi_j \in \Pi$
  with $\theta(b_i) = \theta(b_j)$
 if both $a_i$ and $a_j$ are leaves (with possibly even $b_i = b_j$).
By $C_i$ we denote the component that contains~$b_i$.

For any index $i$,
 we may have:
 \emph{situation}~\sitE
  if
  $\pi_i$ is an \underline{e}lement of $F$,
 \emph{situation}~\sitU
  if $\pi_i \notin F$
  and $C_i$ is \underline{u}ndiscovered,
 and
 \emph{situation}~\sitD
  if $\pi_i \notin F$
  and $C_i$ is \underline{d}iscovered,
   c.f.~Figure~\ref{fig:situations}.

\begin{figure}
 \centerline{%
 \begin{tikzpicture}[%
   pics/setting/.style={
     code = {
       \node (Et) [subgraph] at (0,0.5) {\makebox[1.5cm]{#1}};
       \node (li) [leaf, label={[label distance=1pt, inner sep=0]170:$a_i$}] at (Et.150) {};
       \node (ri) [general] at (Et.30) {};
       \node (ra) [general] at ([shift={(-1mm,6mm)}] ri) {};
       \node (la) [general, label={[label distance=1pt, inner sep=0]180:$b_i$}] at ([shift={(1mm,6mm)}] li) {};
       \node (rb) [general] at ([shift={(-3mm,3mm)}] ra) {};
       \node (lb) [general] at ([shift={(3mm,3mm)}] la) {};
       \draw [gedge] (ri) -- (ra) -- (rb) -- (lb) -- (la) -- (li);
     }
   },
   pics/situation/.style={
     code = {
       \pic {setting={\phantom{$E_t'$}}};
       \node (rb) [leaf] at (rb) {};
       \draw [fedge] (ri) -- (ra);
       \draw [fedge, dash dot dot]
         (ra) -- +(1.5mm,3.75mm)
         (ri) -- +(0,-4.2mm);
       \node at (0,-0.3) {situation~#1};
     }
   }]
  \node [yshift=7mm] at (-0.3,0.5) {\footnotesize unexplored};
  \node at (-0.3,0.5) {\footnotesize \phantom{un}explored};
  \pic at (2,0) {setting={$E_t'$}};
  \node at ([yshift=5.5mm] Et.north) {$O_t$};
  \node at ([shift={(2mm,-3mm)}] la.south west) {$\pi_i$};
  \node at (2,-0.3) {general setting};
  \pic at (5,0) {situation=\sitE};
  \node [general] (ex) at ([yshift=-2.5mm] Et.north) {};
  \draw [fedge] (li) -- (la) -- (lb) -- (ex);
  \draw [fedge, dash dot dot]
    (la) -- +(-2mm,4mm)
    (lb) -- +(-2mm,4mm)
    (ex) -- +(0,-4.2mm);
  \pic at (8,0) {situation=\sitU};
  \draw [fedge] (la) -- (lb);
  \draw [fedge, dash dot dot]
    (la) -- +(-2mm,4mm)
    (lb) -- +(-2mm,4mm);
  \pic at (11,0) {situation=\sitD};
  \node [general] (ex) at ([yshift=-2.5mm] Et.north) {};
  \draw [fedge] (la) -- (lb) -- (ex);
  \draw [fedge, dash dot dot]
    (la) -- +(-2mm,4mm)
    (lb) -- +(-2mm,4mm)
    (ex) -- +(0,-4.2mm);
 \end{tikzpicture}}
 \caption{\label{fig:situations}%
  Illustration of the general setting for an ear $O_t$ and a $\pi_i \in \Pi$
 with $\theta(b_i) = t$,
   and examples of situations~\sitE,~\sitU, and~\sitD.
  Thick edges are in $F$,
  rectangular nodes in $L$.}
\end{figure}

We will assign the degree increments of $a_i$
 to other leaves
 by some charging scheme~$\chi$, which is the sum of several distinct charging schemes.
 The precise definition of these (sub)schemes is subtle and necessarily intertwined with the analysis
 of the schemes' central properties. Thus we will concisely define
 them only within the proofs of Lemmata~\ref{lemma:charge-e}
 and~\ref{lemma:charge-u-and-d} below.
We call a leaf \emph{charged} due to a specific situation
 if that situation applied
 at the time
 when the increment was assigned to the leaf.
Let $\chi_\sitE, \chi_\sitU, \chi_\sitD\colon L \to \mathbb{N}$
 be the overall charges (on a leaf) due to situation \sitE, \sitU, \sitD, respectively.
The leaf-degree property will follow by observing that no leaf is charged too often by these different chargings.

\begin{lemma}\label{lemma:charge-e}
 We can establish a charging scheme $\chi_\sitE$ such that
  we guarantee
  $\chi_\sitE(v) \leq 1$ if $v \in L_1$
  and
  $\chi_\sitE(v) = 0$ if $v \in L_0 \cup L_2$.
\end{lemma}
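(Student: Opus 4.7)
My plan is to define $\chi_\sitE$ by the most naive rule imaginable: whenever situation \sitE\ is triggered at an index $i$, I charge a single unit to the leaf $a_i$ itself. No redirection is needed, because situation \sitE\ already constrains $a_i$ so tightly that its own membership in $L_1$ is forced.

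First I would verify, via a short case analysis on the partition $L = L_0 \cup L_1 \cup L_2$, that $a_i \in L_1$ whenever situation \sitE\ fires. Indeed $\pi_i \in \Pi$ is an ear edge incident to the leaf $a_i$, so $\pi_i \in E'$, and by the defining hypothesis of situation \sitE\ we additionally have $\pi_i \in F$. Since $a_i$ then lies in $V(F)$, it cannot belong to $L_0$ (isolated nodes), and since $\pi_i \in \delta_F(a_i) \cap E' \neq \varnothing$ it cannot belong to $L_2$ either. Hence $a_i \in L_1$, which already gives $\chi_\sitE(v) = 0$ for every $v \in L_0 \cup L_2$.

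Second, I would establish the bound $\chi_\sitE(v) \leq 1$ for each $v \in L_1$ by combining two trivial facts: (i) such a $v$ has $\deg_F(v) = 1$ and hence a unique incident $F$-edge, and (ii) the ear decomposition partitions $E'$ into pairwise edge-disjoint ears, so distinct indices in the enumeration of $\Pi$ refer to distinct edges. If $v$ were charged by two distinct indices $i \neq j$, then $\pi_i$ and $\pi_j$ would be two distinct $F$-edges both incident to $v$, directly violating (i). Therefore at most one index can ever charge $v$.

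I do not anticipate any serious obstacle here: the lemma is essentially bookkeeping built on two structural properties, namely that $F$ is a forest and that the ears are edge-disjoint. The genuine technical weight of the charging scheme will fall on the companion Lemma~\ref{lemma:charge-u-and-d}, where the charge cannot be placed on $a_i$ itself and must instead be redirected onto inner ear nodes or into already-discovered components, subject to the much tighter quotas inherited from the right-hand side of the leaf-degree property.
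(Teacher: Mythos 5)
Your proof is correct and takes essentially the same approach as the paper: charge $a_i$ directly, note that $\pi_i \in F$ together with $\pi_i \in \Pi \subseteq E'$ forces $a_i \in L_1$, and use $\deg_F(a_i) \leq 1$ to rule out a second charge. The only difference is cosmetic — you spell out the case analysis against $L_0$ and $L_2$, whereas the paper states the membership directly; the edge-disjointness of ears you invoke is not actually needed, since distinct indices in the enumeration of $\Pi$ already denote distinct edges by definition.
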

\begin{proof}
 Consider situation~\sitE occurring for index $i$.
 By $\pi_i \in F$,
  we have $a_i \in L_1$
  (and thus $\chi_\sitE(a_i) = 0$ if $a_i \in L_0 \cup L_2$).
 Assume situation~\sitE occurs for another index $j \neq i$
  such that
  $a_i = a_j$.
 This yields $\pi_i, \pi_j \in F$
  which contradicts that $a_i$ is a leaf.
 Hence the claim follows by setting $\chi_\sitE(a_i) = 1$.
\end{proof}
\begin{lemma}\label{lemma:charge-u-and-d}
 We can establish charging schemes $\chi_\sitU,\chi_\sitD$ such that
  we guarantee
  $\chi_\sitU(v) + \chi_\sitD(v) \leq 2$ if $v \in L_0$
  and
  $\chi_\sitU(v) + \chi_\sitD(v) \leq 1$ if $v \in L_1 \cup L_2$.
\end{lemma}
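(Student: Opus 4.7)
My plan is to define $\chi_\sitU$ and $\chi_\sitD$ jointly, using a case analysis driven by the ear $O_{\theta(b_i)}$ in which $b_i$ becomes explored and the position of $b_i$ inside the component $C_i$. Before specifying the schemes, I first collect the following structural facts, on which the bounds rely: (i) an inner node of an ear is by definition unexplored before that ear arrives, so every $\pi_i$ with $b_i = v$ lies in the single ear $O_{\theta(v)}$, and there are at most two such $\pi_i$ (one per ear endpoint); (ii) a component $C$ of $F$ is undiscovered only until the first ear $O_{t_C}$ that has an inner node in $C$ arrives, so \emph{all} situation-\sitU charges related to $C$ come from $O_{t_C}$ and there are at most two of them; (iii) by Lemma~\ref{lemma:essential}, each $\pi_i \notin F$ is essential in $E'\cup F$, which forces any path that would witness its non-essentiality to be absent, thereby constraining how $\pi_i$ sits relative to the tree $C_i$.

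For the \sitU-scheme I distinguish whether $C_i$ is a singleton or not. If $C_i = \{v\}$ (so $v \in L_0$), I charge $\pi_i$ to $v$; by fact~(i) this happens at most twice, giving $\chi_\sitU(v)\le 2$ for $v\in L_0$. If $C_i$ is nontrivial, then $C_i$ has at least two leaves, so I can route the (at most two) \sitU-charges of $O_{t_C}$ to \emph{distinct} leaves of $C_i$: for each relevant $\pi_i$ I perform a tree-walk in $C_i$ starting from $b_i$ that deterministically terminates at a leaf, and I choose the two walks so their endpoints differ (this is possible because the two entry points $b_i$, $b_{i'}$ determine two different ``subtree sides''). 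For the \sitD-scheme, I route each charge along the sub-path of $O_{\theta(b_i)}$ inside $C_i$ to a leaf, again using a tree-walk from $b_i$; essentiality of $\pi_i$ is used to certify that the chosen leaf has not already been spent by an earlier situation-\sitD charge or by the \sitU-charge of $C_i$ (otherwise a redundant edge would yield a non-essential $\pi_i$, contradicting Lemma~\ref{lemma:essential}).

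With the schemes in place the per-leaf bounds are verified by summing charges at each $v \in L$. For $v\in L_0$: the only $\pi_i$'s that could charge $v$ at all are those with $b_i=v$, and by fact~(i) they all lie in $O_{\theta(v)}$ and are all in situation~\sitU, yielding $\chi_\sitU(v)+\chi_\sitD(v)\le 2$ and $\chi_\sitD(v)=0$. For $v\in L_1\cup L_2$: $v$ is a leaf of a nontrivial component $C$; the \sitU-scheme places at most one \sitU-charge on $v$ by the distinct-walk property, and the \sitD-scheme is arranged so that a leaf already targeted by an earlier charge is skipped in favour of another leaf of $C$, which the essentiality argument guarantees to exist.

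The main obstacle will be the \sitD-part: one must show that when a later ear re-enters an already-discovered component $C$, the walk-to-a-leaf procedure can always avoid leaves that have been charged before. This is where Lemma~\ref{lemma:essential} does the heavy lifting, since any ``collision'' on a leaf would produce a redundant non-forest edge, contradicting minimality of $E'$. Getting this collision-freeness argument correctly formalized (including the interaction between the two \sitU-charges of $O_{t_C}$ and all subsequent \sitD-charges into $C$) will be the most delicate part of the proof; the other steps are bookkeeping.
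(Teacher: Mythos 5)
Your proposal correctly identifies the high-level combinatorial task and states two true structural facts, but the charging scheme you sketch is too local to actually deliver the bounds, and the hard part of the argument is explicitly deferred rather than supplied.

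The central gap is that you confine every charge (\sitU and \sitD alike) to leaves \emph{inside} the component $C_i$ containing $b_i$. The paper's argument shows that this is not enough: the key counting inequality for situation~\sitD is $|\PiD(C_i^*)| \le |\aD(C_i^*) \cup L(C_i^*)| - 1$ (Lemma~\ref{lemma:situation-a-component-mapable}), and the set $\aD(C_i^*)$ consists of ear endpoints $a_k$, which are leaves that may lie in \emph{other} components. In other words, a single component can be hit by more situation-\sitD edges than it has leaves, so an injection $\mapD$ must be allowed to target leaves elsewhere; your ``walk to a leaf of $C_i$'' cannot provide that capacity. Similarly, for situation~\sitU the leaves of $C_i$ may already all be spoken for by $\mapD$ (this is exactly why the paper restricts $\mapU$ to the complement $L'(F) = L(F) \setminus \mapD(\PiD(F))$ and then runs a recursive algorithm that hops from component to component following $\mapD$'s structure until it finds a free leaf). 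Your plan does not contain any mechanism to escape $C_i$ when it is exhausted, and the claim that ``the two entry points determine two different subtree sides'' is not a proof that the two \sitU-walks land on distinct free leaves.

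Beyond this structural omission, you explicitly acknowledge that the collision-freeness argument---showing that later charges never re-target an already-used leaf and that essentiality forces the necessary contradictions---``will be the most delicate part of the proof.'' That part is not delicate bookkeeping on top of a finished plan; it \emph{is} the lemma. In the paper it is carried out via Lemmata~\ref{lemma:situation-a-unexplored-leaf}--\ref{lemma:situation-e-mapping} and their internal claims, which build cycle and path witnesses from very specific path concatenations, maintain a careful ordering (situation~\sitD is processed starting from the earliest $\sidx{i}$, situation~\sitU in reverse chronological order), and rely on the $a_k$-targets just mentioned. Since your sketch neither supplies these arguments nor leaves room for the cross-component targets they require, the proposal as written does not constitute a proof and would not close up even after adding detail; the scheme itself would have to change.
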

The proof is rather technical and will be proven in the following subsection.
It mainly exploits
 the finding of contradictions to the fact
 that each edge $e \in E' \setminus F$ is essential in $E' \cup F$.
Two mappings can be established:
 first an injective mapping
  (based on induction)
 from edges $\pi_i$ in situation~\sitD to leaves,
 and second an `almost injective' (relaxing the mappings to $L_0$ nodes slightly) mapping
 from edges $\pi_i$ in situation~\sitU
 to remaining leaves.
For the latter,
 we establish an algorithm that hops through components.
We show that this algorithm identifies suitable distinct leaves.
The charging schemes $\chi_\sitU, \chi_\sitD$
 with the desired properties
 follow from these mappings.
\begin{proof}[Proof of Theorem~\ref{thm:degree-property}]
 Let $v \in L$ be any leaf.
 The charging of $v$ during the whole process
  is $\chi(v) := 2 + \chi_\sitE(v) + \chi_\sitU(v) + \chi_\sitD(v)$
  where the $2$ comes from an implicit charging of the degree of $v$
  when $v$ is discovered.
 By Lemmata~\ref{lemma:charge-e} and~\ref{lemma:charge-u-and-d},
  we obtain
  $\chi(v) \leq 4$ for $v \in L_0$,
  $\chi(v) \leq 4$ for $v \in L_1$,
  and
  $\chi(v) \leq 3$ for $v \in L_2$.
 This yields
 $\sum_{v \in L}{\deg_{E'}(v)}
  \leq
   4 |L_0| + 4 |L_1| + 3 |L_2|
  \leq
   3 (|L| + |L_0|) + |L_1|$.
\end{proof}

\subsection{Proof of Lemma~\ref{lemma:charge-u-and-d}}

We first introduce some notation
 in order to show Lemma~\ref{lemma:charge-u-and-d}.
For any subforest $F' \subseteq F$
 and $S \in \{\sitU, \sitD\}$,
 let $\PiX{S}(F') := \{\pi_j \in \Pi \mid{}$situation~$S$ applies for $j$ with $b_j \in V(F')\}$.
Let
 $\aX{S}(F') := \{a_j \mid \pi_j \in \PiX{S}(F')\}$.
 For any subgraph $H$ in $G$ and two (not necessarily distinct) nodes
 $x_1, x_2 \in V(H)$ we define $\pathset{H}{x_1}{x_2}$ to be the set
 of all paths in $H$ between $x_1$ and~$x_2$.
Consider nodes $w_0,w_1,\dots,w_k \in V$
 for some $k \in \mathbb{N}$
 and a collection $\mathcal{P}_1,\dots,\mathcal{P}_k$
 of $w_{j-1}$-$w_j$-paths, that is,
 $\mathcal{P}_j \subseteq \pathset{G}{w_{j-1}}{w_j}$ for
 $j=1,\dots,k$.
Then let
 $\buildpath{\mathcal{P}_1, \ldots, \mathcal{P}_k}$ denote the set
 of all $w_0$-$w_k$-paths that are the concatenation of $k$
 (necessarily) pairwise disjoint paths $P_1,\dots,P_k$ with
 $P_j\in\mathcal{P}_j$ for $j=1,\dots,k$.
For notational simplicity,
 we may also use single paths and single edges as sets
 $\mathcal{P}_j$.  Note that
 $\buildpath{\pathset{E_t'}{x_1}{a_i},
   \pathset{E_t'}{a_i}{x_2}} \neq \varnothing$ for nodes
 $x_1, x_2 \in V(E_t')$, which follows from the well-known fact that
 any 2-edge-connected graph contains two disjoint $u$-$v$-paths for
 all nodes $u, v$.

The below proofs of our auxiliary lemmas will use the following reasoning.
We will determine an edge $\pi_j \in \Pi \setminus F$
 and a set of paths $\mathcal{P}$ such that there is a cycle
 $Q \in \mathcal{P}$ with $V(\pi_j) \subseteq V(Q)$, and
 $\pi_j \notin Q$.
Since $\pi_j \notin F$, we say that
 $\cyclewitness{j}{\mathcal{P}}$ is a \emph{cycle witness} that
 contradicts our assumption that every edge in $E' \setminus F$ is
 essential in $E' \cup F$.

\begin{lemma}\label{lemma:situation-a-unexplored-leaf}
 Let $\pi_i \in \PiD(F)$
  and $t := \theta(b_i)$.
 There are no two disjoint paths in $C_i$ between $b_i$ and explored nodes.
 Moreover,
  there is at least one unexplored leaf in $C_i$.
\end{lemma}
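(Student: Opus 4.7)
The plan is to prove the first statement by contradiction against the essentiality of $\pi_i \in E' \setminus F$ in $E' \cup F$, constructing a cycle witness; the second statement then follows from the first via an elementary tree-structure argument. Throughout I rely on the standard fact that $E_t'$ is 2-edge-connected, which follows by a short induction on the ear decomposition starting from the cycle $O_0$ (each subsequent ear preserves 2-edge-connectivity). By construction $a_i \in V(E_t')$ is an endpoint of $O_t$, while $b_i$ is an inner node of $O_t$ and hence unexplored at time $t$, and by situation~\sitD we have $\pi_i \notin F$ together with $C_i$ containing some explored node.

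For the first statement I would suppose toward contradiction that there are two disjoint paths $P_1, P_2 \subseteq C_i$ from $b_i$ to explored nodes $x_1, x_2 \in V(E_t')$; necessarily $x_1 \neq x_2$ because $C_i$ is a tree. The central object is the subgraph $H := E_t' \cup P_1 \cup P_2$, which lies in $E' \cup F$ and, since $\pi_i \in O_t$ and $\pi_i \notin F$, does not contain $\pi_i$. A brief case analysis shows that every $a_i$-$b_i$-edge-cut in $H$ has size at least two: if both $x_1, x_2$ lie on $a_i$'s side, then the disjoint paths $P_1, P_2$ each contribute one (distinct) crossing edge; otherwise some $x_j$ lies on $b_i$'s side, so $V(E_t')$ is split nontrivially by the cut (with $a_i$ on one side and $x_j$ on the other) and 2-edge-connectivity of $E_t'$ supplies two crossing edges already inside $E_t' \subseteq H$. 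Menger's theorem then gives two edge-disjoint $a_i$-$b_i$-paths in $H$ whose union forms a cycle $Q \subseteq H$ with $V(\pi_i) \subseteq V(Q)$ and $\pi_i \notin Q$, i.e., the cycle witness $\cyclewitness{i}{Q}$, contradicting the essentiality of $\pi_i$ in $E' \cup F$.

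For the second statement, note first that $C_i$ has at least two nodes because it contains both the unexplored $b_i$ and, by discoveredness, some explored node. If $\deg_F(b_i) = 1$ then $b_i$ itself lies in $L_1 \cup L_2 \subseteq L$ and is unexplored, so we are done by taking $b_i$. Otherwise $b_i$ has at least two neighbors in $C_i$, so removing $b_i$ splits $C_i$ into at least two subtrees; by the just-proven first statement at most one of these subtrees contains an explored node, hence there is a subtree $T^*$ entirely free of explored nodes. Taking any tree-leaf $v$ of $T^*$ that is not its root (or $v = w$ when $T^* = \{w\}$ is a singleton) yields a node with $\deg_F(v) = 1$, hence $v \in L_1 \cup L_2 \subseteq L$, that is unexplored because $v \in T^*$. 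The main subtlety lies in the cut analysis of the first statement, where one must account for the possibility that $F$ and $E_t'$ share edges; this is handled cleanly by counting cut sizes directly in the edge-set $H$, so no edge is ever double-counted.
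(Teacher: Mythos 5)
Your proof is correct and follows essentially the same strategy as the paper: both reduce the first claim to a cycle witness --- a cycle in $E' \cup F$ through both endpoints of $\pi_i$ that avoids $\pi_i$, contradicting its essentiality --- and the second claim to the same branch-at-$b_i$ tree argument. The only difference is mechanical: the paper exhibits the cycle explicitly as the concatenation of $P_1$, two disjoint paths $w_1$--$a_i$ and $a_i$--$w_2$ inside the 2-edge-connected $E_t'$, and $P_2$ (after trimming $P_1, P_2$ so that they contain no explored nodes besides $w_1, w_2$), whereas you obtain it by the cut-counting/Menger argument on $E_t' \cup P_1 \cup P_2$.
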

\begin{proof}
 Assume
  there are two disjoint paths
  $P_1, P_2$
  between $b_i$ and nodes $w_1, w_2 \in V(E_t') \cap V(C_i)$, respectively.
 We can w.l.o.g.\ assume that $P_1, P_2$ do not contain explored nodes other than $w_1, w_2$, respectively.
 Then $\badcycle{i}{P_1, \pathset{E_t'}{w_1}{a_i}, \pathset{E_t'}{a_i}{w_2}, P_2}$ is a cycle witness.

 The second claim follows directly
  as $b_i$ is either an unexplored leaf itself
  or there is a path to another leaf
  that must be unexplored by the first claim.
\end{proof}
Consider component $C_i$ for $\pi_i \in \PiD(F)$.
Based on the above lemma, we define
$\bar C_i$ as the unique path in $\pathset{(C_i \setminus E_t')}{b_i}{y}$
 where $y \in V(E_t') \cap V(C_i)$ is an explored node.
Furthermore, let $C_i^*$ be the component in $C_i \setminus \bar C_i$
 that contains $b_i$.

\begin{lemma}\label{lemma:path-witness}
 Let $S \in \{\sitU, \sitD\}$,
  $\pi_i \in \PiX{S}(F)$,
  $t := \theta(b_i)$,
  $H^\sitU := (E' \cup F) \setminus E_{t+1}'$,
  and
  $H^\sitD := (E' \cup F) \setminus (E_t' \cup \{\pi_i\} \cup \bar C_i)$.
 We have $\pathset{H^S}{b_i}{x} = \varnothing$
  for any $x \in V(E_t')$.
\end{lemma}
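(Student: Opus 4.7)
The plan is to proceed by contradiction: assume there is some $P\in\pathset{H^S}{b_i}{x}$ with $x\in V(E_t')$ and construct a cycle witness for $\pi_i$, thereby contradicting the hypothesis that every edge of $E'\setminus F$ is essential in $E'\cup F$.

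First I would exhibit a ``backup'' $b_i$-to-$V(E_t')$ path $B$ that (i)~is edge-disjoint from $P$, (ii)~avoids $\pi_i$, and (iii)~is edge-disjoint from $E_t'$. In situation~\sitU I would take $B$ to be the sub-path of the ear $O_t$ that leaves $b_i$ in the direction opposite to $a_i$ and ends at the other endpoint $v'\in V(E_t')$ of $O_t$; because $H^\sitU=(E'\cup F)\setminus E_{t+1}'$ deliberately excludes all of $O_t$, we have $B\cap P=\varnothing$ and $B\cap E_t'=\varnothing$, while $\pi_i\notin B$ by construction. In situation~\sitD I would take $B:=\bar C_i$, the unique $F$-path from $b_i$ to an explored node $y\in V(E_t')\cap V(C_i)$ provided by Lemma~\ref{lemma:situation-a-unexplored-leaf}; since $H^\sitD$ omits both $\bar C_i$ and $E_t'$ by definition, and $\pi_i\in E'\setminus F$ cannot lie on the $F$-path $\bar C_i$, the three requirements are again immediate.

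Next I would assemble the cycle witness $\badcycle{i}{B,\pathset{E_t'}{y'}{a_i},\pathset{E_t'}{a_i}{x},P^{-1}}$, where $y'$ denotes $v'$ in situation~\sitU and $y$ in situation~\sitD. The resulting closed path travels $b_i\to y'\to a_i\to x\to b_i$, contains $V(\pi_i)=\{a_i,b_i\}$ among its vertices, and avoids $\pi_i$ since each of its four pieces does. Pairwise edge-disjointness splits into two parts: the outer pieces $B$ and $P^{-1}$ both lie outside $E_t'$ and, by the previous paragraph, are disjoint from each other; the two inner $E_t'$-pieces can be chosen edge-disjoint by invoking the paper's standing remark that $\buildpath{\pathset{E_t'}{x_1}{a_i},\pathset{E_t'}{a_i}{x_2}}\neq\varnothing$ for any $x_1,x_2\in V(E_t')$, a direct consequence of $E_t'$ being 2-edge-connected.

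Given such a cycle in $(E'\cup F)\setminus\{\pi_i\}$ through both endpoints of $\pi_i$, the contradiction follows by the standard contraction observation already used in Lemma~\ref{lemma:essential}: contracting the cycle turns $\pi_i$ into a self-loop, which is trivially nonessential, forcing $\pi_i$ to be nonessential in $E'\cup F$ and contradicting the hypothesis. The main obstacle I anticipate is simply recognising that $O_t$ (for~\sitU) and $\bar C_i$ (for~\sitD) are the natural backup paths, since they are exactly the structures deliberately removed in the definitions of $H^\sitU$ and $H^\sitD$; once this is seen, the disjointness verification and the appeal to Lemma~\ref{lemma:situation-a-unexplored-leaf} handle both situations uniformly.
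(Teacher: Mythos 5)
Your proposal is correct and follows essentially the same route as the paper: assume a path $P\in\pathset{H^S}{b_i}{x}$ exists, then close it into a cycle through $\pi_i$'s endpoints (avoiding $\pi_i$ itself) using $O_t\setminus\{\pi_i\}$ as the backup piece for situation~\sitU and $\bar C_i$ for situation~\sitD together with two disjoint $E_t'$-paths through $a_i$ — this is exactly the cycle witness $\badcycle{i}{P,\pathset{E_t'}{x}{a_i},\pathset{E_t'}{a_i}{y},Q^S}$ the paper constructs. The only difference is that you spell out the disjointness checks and the contraction argument that the paper leaves implicit in its cycle-witness convention.
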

\begin{proof}
 Assume there is a path $P \in \pathset{H^S}{b_i}{x}$.
 Let $Q^\sitU := O_t \setminus \{\pi_i\}$
  and $Q^\sitD := \bar C_i$.
 We have a cycle witness
  $\badcycle{i}{P, \pathset{E_t'}{x}{a_i}, \pathset{E_t'}{a_i}{y}, Q^S}$
  with $y \in V(Q^S) \cap V(E_t')$.
\end{proof}
This allows us to define a \emph{path witness}
 $\pathwitness{j}{\mathcal{P}}$
 for a situation $S \in \{\sitU, \sitD\}$
 as a shorthand for a cycle witness
 on edge $\pi_j$ with $P \in H^S$
 in the proof of Lemma~\ref{lemma:path-witness}.

\begin{lemma}\label{lemma:situation-a-component-mapable}
 Let $\pi_i \in \PiD(F)$.
 We have $|\PiD(C_i^*)| \leq |\aD(C_i^*) \cup L(C_i^*)| - 1$.
\end{lemma}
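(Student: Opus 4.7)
My plan is to construct an injective map $\phi\colon \PiD(C_i^*) \to \aD(C_i^*) \cup L(C_i^*)$ whose image omits at least one element of the codomain; this gives the claimed bound. I would proceed by induction, processing the elements of $\PiD(C_i^*)$ in decreasing order of $\theta(b_j)$, so that the ear $O_{\theta(b_j)}$ associated with the currently treated edge is always the ``latest'' one living inside $C_i^*$.

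For the base case, $\PiD(C_i^*) = \varnothing$, and it suffices to show $L(C_i^*) \neq \varnothing$. Since $b_i \in V(C_i^*)$, Lemma~\ref{lemma:situation-a-unexplored-leaf} guarantees an unexplored leaf inside $C_i$, and the definition of $\bar C_i$ as the unique path leaving $C_i^*$ toward the explored region localises such a leaf inside $V(C_i^*)$. For the inductive step, I would pick the edge $\pi_j \in \PiD(C_i^*)$ with maximum $\theta(b_j)$. The default assignment is $\phi(\pi_j) := a_j$: whenever $a_j \notin V(C_i^*)$, the element $a_j$ contributes to $\aD(C_i^*) \setminus L(C_i^*)$ and removing $\pi_j$ reduces both sides of the inequality by exactly one, so the induction carries through on $C_i^*$. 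When $a_j \in V(C_i^*)$, the leaf $a_j$ already sits in $L(C_i^*)$ and one has to reroute $\pi_j$ to a different target; here the ear $O_{\theta(b_j)}$ provides the needed slack, since its inner nodes are freshly explored and a leaf of $F$ inside $O_{\theta(b_j)}$ will be available as a target without conflict with the smaller-index edges considered later in the induction.

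The main obstacle will be handling collisions of different types: two edges $\pi_j, \pi_k \in \PiD(C_i^*)$ sharing the leaf $a_j = a_k$, or two edges of a common ear sharing the same $b$. To rule these out or redirect them, I would invoke Lemma~\ref{lemma:path-witness}, which supplies path witnesses contradicting the essentiality of edges in $E' \setminus F$ whenever too many internally disjoint routes through a shared node coexist. A typical step would splice a disjoint path through the shared leaf with another through the current ear, producing a cycle in $E' \cup F$ that renders some edge of $E' \setminus F$ nonessential, contradicting the hypothesis on $F$. Orchestrating these witnesses so that at every stage exactly one leaf of $L(C_i^*)$ remains outside the image of $\phi$ is the delicate bookkeeping at the heart of this lemma, and it is where the assumption that $\pi_i \in \PiD(F)$ (rather than some generic index) is critical: the edge $\pi_i$ itself ``reserves'' the omitted leaf, which is why the bound carries the $-1$.
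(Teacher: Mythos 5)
Your proposal correctly identifies the shape of the argument---an injection into $\aD(C_i^*) \cup L(C_i^*)$ that leaves at least one target unused---and the base-case reasoning (that $L(C_i^*) \neq \varnothing$, via Lemma~\ref{lemma:situation-a-unexplored-leaf} and the definition of $\bar C_i$) is essentially sound. However, the inductive step is not actually carried out: it is a sketch of where the proof would have to do work, not the work itself. When $a_j \in V(C_i^*)$ you propose ``rerouting'' $\pi_j$ to ``a leaf of $F$ inside $O_{\theta(b_j)}$,'' but there is no reason such a node exists: the inner nodes of the ear $O_{\theta(b_j)}$ are nodes of $E'$, not necessarily leaves of $F$, and need not lie in $C_i^*$ at all. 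The assertion that this rerouted target avoids conflicts with smaller-index edges is stated without argument. Similarly, the case $a_j \notin V(C_i^*)$ does not simply ``reduce both sides by one'': two distinct edges $\pi_j, \pi_k \in \PiD(C_i^*)$ may share $a_j = a_k$, so removing $\pi_j$ can shrink the left side without shrinking the right. You flag this collision issue, but resolve it only with the placeholder remark that ``a typical step would splice a disjoint path \ldots producing a cycle.''

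That is precisely where the paper's proof spends all of its effort, and it does so via a structurally different decomposition. Rather than iterating over $\PiD(C_i^*)$ in $\theta$-order and patching an injection edge-by-edge, the paper builds a nested sequence of subtrees $T_1 \subseteq T_2 \subseteq \cdots \subseteq T_r = C_i^*$ by adding the leaves $x_1, \ldots, x_r$ of $L(C_i^*)$ in increasing $\theta$-order, each step attaching a fresh path $P_j$. It then establishes four concrete claims---that $\PiD$ restricted to any subpath is an independent edge set, that the terminal leaf $x$ is never an $\aD$-node of a containing path, that $|\aD(P_j') \cap a_{j-1}| \leq 1$, and that $x_j \notin a_{j-1}$---each by constructing a specific cycle or path witness, and finally chains these into the inequality $|\PiD(T_j)| \leq |a_j| - 1$ by induction over $j$. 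This tree-growing scheme is what controls the collisions you identify but leave unresolved, and the $-1$ slack falls out of the chain uniformly rather than being ``reserved'' by $\pi_i$. Your proposal, as written, has a genuine gap at the heart of the lemma.
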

\begin{proof}
 Note that $a_i \notin L(C_i^*)$.
 Let $r := |L(C_i^*)|$,
  and
  $x_1, x_2, \ldots, x_r$
  the members of $L(C_i^*)$
  such that
   for each $j \in \{2, \ldots, r\}$
   we have $\theta(x_j) \geq \theta(x_{j-1})$.
 Let $T_1$ be the path in $\pathset{C_i^*}{b_i}{x_1}$.
 Given $T_{j-1}, j \in \{2, \ldots, r\}$,
  we obtain $T_j$
  by adding a path $P_j \in \pathset{(C_i^* \setminus T_j)}{h_j}{x_j}$
  where $h_j \in V(T_j)$. %
 Note that
  $V(P_j) \cap V(T_{j-1}) = \{h_j\}$
  and
  $L(T_j) = L(T_{j-1}) \cup \{x_j\}$.
 By definition, we have $T_r = C_i^*$.
 For brevity,
  let
  $P_j' := P_j - h_j$
  and
  $a_j := \aD(T_j) \cup L(T_j)$
  for any~$j$.

\begin{claim}\label{claim:mapping:subpath-independent}
 Let $P \in \pathset{C_i^*}{b_i}{x}$ for any $x \in L(C_i^*)$.
 Let $Q \subseteq P$ be any subpath of~$P$.
 Then $\PiD(Q)$ is independent,
 i.e.,
 no two of its edges have a common node.
\end{claim}
\begin{claimproof}
 There are three cases.
 (1)~Assume there are distinct $\pi_k, \pi_\ell \in \PiD(Q)$
  with $b_k = b_\ell$.
 Let $t := \theta(b_\ell)$.
 For $w_1, w_2$ being the endpoints of $O_t$,
  we have a cycle witness $\badcycle{k}{O_t, \pathset{E_t'}{w_1}{w_2}}$.
 (2)~Assume there are distinct $\pi_k, \pi_\ell \in \PiD(Q)$
  with $b_k = a_\ell$.
 Since $\pi_k \in \Pi$, we have $b_k \in L$,
  i.e.,
  $Q$ ends at $b_k$
  and $b_k = x$.
 Hence,
  $\pathset{Q}{b_\ell}{b_k}$
  contradicts Lemma~\ref{lemma:situation-a-unexplored-leaf}
  since $\theta(b_k) = \theta(a_\ell) = t$.
 (3)~Assume there are distinct
  $\pi_k, \pi_\ell \in \PiD(Q)$
  with
  $a_k = a_\ell$,
  w.l.o.g.\ $k < \ell$.
 We have a path~witness $\badpath{k}{\pathset{Q}{b_k}{b_\ell}, \pi_\ell}$.%
\end{claimproof}
\begin{claim}\label{claim:mapping:subpath-leaf-is-not-u}
 Let $P \in \pathset{C_i^*}{b_i}{x}$ for any $x \in L(C_i^*)$.
 Let $Q \subseteq P$ be any subpath of~$P$ with $x \in V(Q)$.
 Then $x \notin \aD(Q)$.
\end{claim}
\begin{claimproof}
 Assume not.
 We have $x \in \aD(O_{\theta(b_k)})$ for some index $k > i$.
 Hence $x \in V(C_k^*)$ is an explored node at time $\theta(b_k)$
  which contradicts Lemma~\ref{lemma:situation-a-unexplored-leaf}.
\end{claimproof}
\begin{claim}\label{claim:mapping:one-common}
 For each $j \in \{2, \ldots, r\}$, we have
  $|\aD(P_j') \cap a_{j-1}| \leq 1$.
\end{claim}
\begin{claimproof}
 Assume not.
 Let $Q$ be the path in $T_{j+1}$ between $b_i$ and $x_j$.
 $\PiD(Q)$ is independent by Claim~\ref{claim:mapping:subpath-independent}.
 Hence
  there are
  $\pi_{\ell_1}, \pi_{\ell_2} \in \PiD(Q)$
  and
  $v_1, v_2 \in V(T_{j-1} \setminus Q), v_1 \neq v_2,$
  such that
  one of the following holds:
  (1)~$\mathcal{A}_1$ and~$\mathcal{A}_2$,
  (2)~$\mathcal{A}_1$ and~$\mathcal{B}_2$,
  (3)~$\mathcal{B}_1$ and~$\mathcal{A}_2$,
  (4)~$\mathcal{B}_1$ and~$\mathcal{B}_2$,
 where
 $\mathcal{A}_d$, $d=1,2$, is the case that
  there is a $\pi_{k_d} \in \PiD(T_{j-1} \setminus Q)$
  with $b_{k_d} = v_d$
  and $a_{\ell_d} = a_{k_d}$,
 and $\mathcal{B}_d$ is the case that
  we have $v_d \in L(T_{j-1})$
  with $a_{\ell_d} = v_d$.
 W.l.o.g.\ $\ell_1 \leq \ell_2$.

 For case~(1), we have
  a path witness
  $\badpath{\ell_1}{\pathset{P_j'}{b_{\ell_1}}{b_{\ell_2}}, \pi_{\ell_2}, \pi_{k_2}, \pathset{T_{j-1}}{v_2}{v_1}, \pi_{k_1}}$.

 For case~(2), we have
  a path witness
  $\badpath{\ell_1}{\pathset{P_j'}{b_{\ell_1}}{b_{\ell_2}}, \pi_{\ell_2}, \pathset{T_{j-1}}{v_2}{v_1}, \pi_{k_1}}$.

 For case~(3), we have
  a path witness
  $\badpath{\ell_1}{\pathset{P_j'}{b_{\ell_1}}{b_{\ell_2}}, \pi_{\ell_2}, \pi_{k_2}, \pathset{T_{j-1}}{v_2}{v_1}}$.

 For case~(4), we have
  a path witness
  $\badpath{\ell_1}{\pathset{P_j'}{b_{\ell_1}}{b_{\ell_2}}, \pi_{\ell_2}, \pathset{T_{j-1}}{v_2}{v_1}}$.
\end{claimproof}
\begin{claim}\label{claim:mapping:tree-leaf}
 For each $j \in \{2, \ldots, r\}$, we have
  $x_j \notin a_{j-1}$.
\end{claim}
\begin{claimproof}
 We have $x_j \notin L(T_{j-1})$ by definition of $T_{j-1}$.
 It remains to show $x_j \in \aD(T_{j-1})$.
 Assume not.
 There is a $\pi_\ell \in \PiD(T_{j-1})$
  with $a_\ell = x_j$.
 Choose $x_k \in L(T_{j-1})$
  such that $b_\ell$ lies on the path between $b_i$ and $x_k$.
 By definition of $x_j$,
  we have $\theta(x_j) > \theta(x_k)$.
 Lemma~\ref{lemma:situation-a-unexplored-leaf}
  at time $\theta(b_\ell)$
  gives $\theta(x_k) > \theta(b_\ell)$.
 By $a_\ell = x_j$,
  we have $\theta(b_\ell) > \theta(x_j)$,
  a contradiction.
\end{claimproof}

 We show
  $|\PiD(T_j)| \leq |a_j| - 1$
  inductively
  for all $j \in \{1, \ldots, r\}$.
 First consider $j = 1$.
 Since $T_1$ is a path,
  $\PiD(T_1)$ is independent
  by Claim~\ref{claim:mapping:subpath-independent};
  hence
  $|\PiD(T_1)| = |\aD(T_1)|$.
 The claim follows by observing that
  $|\aD(T_1)|
   = |\aD(T_1) \cup \{x_1\}| - 1
   = |a_1| - 1$
   since $x_1 \notin \aD(T_1)$
   by Claim~\ref{claim:mapping:subpath-leaf-is-not-u}.
 We now assume that the claim holds for~$j - 1$ with~$j \in \{2, \ldots, r\}$,
  and show that it holds for~$j$.
 We get
 \begin{align*}
  |\PiD(T_j)|
  &
   = |\PiD(P_j) \cup \PiD(T_{j-1})|
   = |\PiD(P_j')| + |\PiD(T_{j-1})|
  \\ &
   \leq |\PiD(P_j')| + |a_{j-1}| - 1
  & & \text{by induction}
  \\ &
   = |\aD(P_j')| + |a_{j-1}| - 1
  & & \text{by Claim~\ref{claim:mapping:subpath-independent}}
  \\ &
   \leq |\aD(P_j')| + |a_{j-1}| - |\aD(P_j') \cap a_{j-1}|
  & & \text{by Claim~\ref{claim:mapping:one-common}}
  \\ &
   = |\aD(P_j') \cup a_{j-1}|
  \\ &
   = |\aD(P_j') \cup a_{j-1} \cup \{x_j\}| - 1
  & & \text{by Claims~\ref{claim:mapping:subpath-leaf-is-not-u} and~\ref{claim:mapping:tree-leaf}}
  \\ &
   = |\aD(P_j') \cup \aD(T_{j-1}) \cup L(T_j)| - 1
  \\ &
   = |\aD(T_j) \cup L(T_j)| - 1
   = |a_j|- 1
   \text{.}
   & & \qedhere
 \end{align*}
\end{proof}

For each $\pi_i \in \PiD(F)$,
 let $\sidx{i} := \min\{j \mid \pi_i \in \PiD(C_j^*)\}$
 be the index of the earliest situation~\sitD on component $C_i$.
Let $\startsD := \{\sidx{i} \mid \pi_i \in \PiD(F)\}$.
Using Lemma~\ref{lemma:situation-a-component-mapable},
 we construct an injection $\mapD_i\colon \PiD(C_i^*) \to L \setminus \{a_i\}$
 for every $i \in \startsD$.
First observe that
 $|\PiD(C_i^*)| \leq |(\aD(C_i^*) \setminus \{a_i\}) \cup L(C_i^*)|$
 by
 $a_i \in \aD(C_i^*)$
 and
 $a_i \notin L(C_i^*)$ (see Lemma~\ref{lemma:situation-a-unexplored-leaf}).
There might be distinct $\pi_j, \pi_k \in \PiD(C_i^*)$
 with $a_j = a_k$.
It is possible to construct
 $\mapD_i$ %
 as injection
 such that
 for each $w \in \aD(C_i^*) \setminus \{a_i\}$
 there is one $k$
 with $w = a_k$
 and
 $\mapD_i(\pi_k) = a_k$.
Since components are a partition of $F$,
 we can define a mapping $\mapD\colon \PiD(F) \to L$
 by $\mapD := \bigcup_{j \in \startsD}{\mapD_j}$.

\begin{lemma}\label{lemma:situation-a-mapping}
 The mapping $\mapD$ is an injection.
\end{lemma}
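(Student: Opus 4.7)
The plan is to proceed by contradiction. Suppose there exist distinct $\pi_k, \pi_\ell \in \PiD(F)$ with $\mu(\pi_k) = \mu(\pi_\ell) = w$, and set $j := \sidx{k}$ and $j' := \sidx{\ell}$, assuming without loss of generality $j \le j'$. The case $j = j'$ is immediate: both edges lie in $\PiD(C_j^*)$ by the very definition of $\sidx{\cdot}$, so $\mu_j(\pi_k) = w = \mu_j(\pi_\ell)$ contradicts the injectivity of the single component map $\mu_j$ established right above.

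For the harder case $j < j'$, I would first use the minimality defining $\sidx{\ell} = j'$ to deduce $\pi_\ell \notin \PiD(C_j^*)$, i.e., $b_\ell \notin V(C_j^*)$ (otherwise we would have $\sidx{\ell} \le j < j'$). On the other hand, since $w$ lies in the image of $\mu_j$, we have $w \in (\aD(C_j^*) \setminus \{a_j\}) \cup L(C_j^*)$, and symmetrically $w \in (\aD(C_{j'}^*) \setminus \{a_{j'}\}) \cup L(C_{j'}^*)$. The remaining argument is a case analysis on which of these two subsets $w$ belongs to on each side. The common theme is to splice together $\pi_k$, $\pi_\ell$, a tree path in the relevant $F$-component(s), and, if needed, a further path in some $E_t'$ between the $a$-endpoints of the two ears, producing a cycle in $E' \cup F$ that covers $V(\pi_\ell)$ but avoids the edge $\pi_\ell$---a cycle witness contradicting the essentiality hypothesis of Theorem~\ref{thm:degree-property}, in the spirit of Lemma~\ref{lemma:path-witness} and the sub-claims within Lemma~\ref{lemma:situation-a-component-mapable}.

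To illustrate the flavour, consider the representative sub-case $w \in L(C_j^*) \cap L(C_{j'}^*)$. Then $w \in V(C_j) \cap V(C_{j'})$, forcing $C_j = C_{j'} =: C$ because distinct components of $F$ have disjoint vertex sets. Both $b_k$ and $b_\ell$ thus lie in the same tree $C$, and concatenating the unique tree paths in $C$ from $b_k$ and $b_\ell$ to $w$ with $\pi_k$, $\pi_\ell$, and a path in a sufficiently late $E_t'$ between $a_k$ and $a_\ell$ (which exists by 2-edge-connectivity of $E_t'$ once both endpoints are explored) yields the desired cycle witness. The sub-case $w = a_k = a_\ell$ is handled analogously, joining the two ears through $w$ via an $F$-path between $b_k$ and $b_\ell$ (if $C_k = C_\ell$), or otherwise by routing through $\bar C_j$ to reach an explored node first; the mixed placements of $w$ in $L$ on one side and $\aD$ on the other are hybrids of these two constructions.

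I expect the main obstacle to be the disjointness bookkeeping when assembling the witnessing cycles. In particular, when $C_j = C_{j'}$, the two subtrees $C_j^*$ and $C_{j'}^*$ are carved out of the same $F$-tree by distinct paths $\bar C_j$ and $\bar C_{j'}$; the characterization of $\bar C_j$ as the unique $b_j$-to-explored-node path in $C \setminus E_{\theta(b_j)}'$ (Lemma~\ref{lemma:situation-a-unexplored-leaf}), together with the separation $b_\ell \notin V(C_j^*)$ derived from the minimality of $\sidx{\ell}$, is what must be combined to guarantee that the constructed cycle's subpaths remain pairwise disjoint, in a style reminiscent of Claims~\ref{claim:mapping:one-common} and~\ref{claim:mapping:tree-leaf}.
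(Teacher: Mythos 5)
There is a genuine gap, on two levels. First, your case analysis is missing the one structural fact about $\mapD_{\sidx{k}}$ that the argument hinges on: the map is constructed so that whenever its image is a node of $\aD(C_{\sidx{k}}^*)\setminus\{a_{\sidx{k}}\}$ rather than a leaf of $L(C_{\sidx{k}}^*)$, that image is the mapped edge's \emph{own} endpoint, i.e.\ $\mapD_{\sidx{k}}(\pi_k)=a_k$. You only record $w\in(\aD(C_j^*)\setminus\{a_j\})\cup L(C_j^*)$ on each side, which neither ties $w$ to $a_k$ or $a_\ell$ nor rules out spurious combinations. With that property the collision collapses into exactly the three cases the paper treats, namely $w=a_\ell\in L(C_{\sidx{k}}^*)$, $w=a_k\in L(C_{\sidx{\ell}}^*)$, and $w=a_k=a_\ell$; your ``representative'' sub-case $w\in L(C_j^*)\cap L(C_{j'}^*)$ is precisely the configuration that the construction of $\mapD$ prevents, so working it out does not address the cases that actually occur.

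Second, the witness you build in that sub-case does not certify anything: the closed walk you describe uses both $\pi_k$ and $\pi_\ell$ as edges, whereas a cycle witness for an edge must \emph{avoid} that edge while covering its endpoints — a cycle through $\pi_\ell$ is perfectly compatible with $\pi_\ell$ being essential. (Moreover, concatenating the two tree paths $b_k\leadsto w$ and $w\leadsto b_\ell$ inside the same tree repeats the edges on the common tail towards $w$, so the object need not even be a cycle under the paper's edge-set definition of paths.) The paper's contradiction is aimed at a different edge altogether: in each of the three cases it produces a \emph{path witness} for the start edge $\pi_{\sidx{\ell}}$ in the sense of Lemma~\ref{lemma:path-witness}, e.g.\ $\badpath{\sidx{\ell}}{\pathset{C_{\sidx{\ell}}^*}{b_{\sidx{\ell}}}{b_\ell},\pi_\ell,\pathset{C_{\sidx{k}}^*}{w}{b_{\sidx{k}}},\pi_{\sidx{k}}}$, where the index comparisons ($\ell\neq\sidx{\ell}$, $k<\ell$, $\sidx{k}<\sidx{\ell}$) guarantee both that $\pi_{\sidx{\ell}}$ is not on the path and that the path ends at the already-explored node $a_{\sidx{k}}$ while staying inside the required graph $H^{\sitD}$. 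Your sketch, which tries to witness $\pi_\ell$ itself via a detour through a later $E_t'$, would have to be reworked along these lines to go through.
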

\begin{proof}
 Assume there are $\pi_k, \pi_\ell \in \PiD(F)$
  with $\sidx{k} < \sidx{\ell}$
  and $w := \mapD_\sidx{k}(\pi_k) = \mapD_\sidx{\ell}(\pi_\ell)$.
 We have $C_\sidx{k}^* \neq C_\sidx{\ell}^*$
  since otherwise $\mapD_\sidx{k}(\pi_k) = \mapD_\sidx{k}(\pi_\ell)$
  contradicts the injectivity of $\mapD_\sidx{k}$.
 The following three cases remain:
  (1)~$w = a_\ell \in L(C_\sidx{k}^*)$,
  (2)~$w = a_k \in L(C_\sidx{\ell}^*)$,
  and
  (3)~$w = a_k = a_\ell$.
 Consider case~(1).
 By $\ell \neq \sidx{\ell}$
  and $k < \ell$ (since $w = a_\ell$),
 we have a path witness
  $\badpath{\sidx{\ell}}{%
   \pathset{C_\sidx{\ell}^*}{b_\sidx{\ell}}{b_\ell},
   \pi_\ell,
   \pathset{C_\sidx{k}^*}{w}{b_\sidx{k}},
   \pi_\sidx{k}}$.
 Consider case~(2).
 By $k \neq \sidx{k}$
  and $\ell < k$ (since $w = a_k$),
 we have a path witness
  $\badpath{\sidx{\ell}}{%
   \pathset{C_\sidx{\ell}^*}{b_\sidx{\ell}}{w},
   \pi_k,
   \pathset{C_\sidx{k}^*}{b_k}{b_\sidx{k}},
   \pi_\sidx{k}}$.
 For case~(3),
 we have a path witness
  $\badpath{\sidx{\ell}}{%
   \pathset{C_\sidx{\ell}^*}{b_\sidx{\ell}}{b_\ell},
   \pi_\ell,
   \pi_k,
   \pathset{C_\sidx{k}^*}{b_k}{b_\sidx{k}},
   \pi_\sidx{k}}$.
\end{proof}

For any $F' \subseteq F$,
 let $L'(F') := L(F') \setminus \mapD(\PiD(F))$
 be the leaves not used by $\mapD$.

\begin{lemma}\label{lemma:situation-e-mapping}
 There is a mapping $\mapU\colon \PiU(F) \to L'(F)$
  such that
  for each $v \in L$,
  we have
  $|\mapU^{-1}(v)| \leq 2$ if $v \in L_0$
  and
  $|\mapU^{-1}(v)| \leq 1$ otherwise.
\end{lemma}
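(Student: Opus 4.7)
The plan is to construct $\mapU$ by processing the $F$-components one at a time. First I observe that for every component $C$ of $F$, the set $\{\pi_i\in\PiU(F):b_i\in V(C)\}$ has size at most two. Indeed, as soon as $C$ becomes discovered, no further situation-\sitU involving $C$ can arise; hence all such edges $\pi_i$ lie on the single ear $O_{t_C}$ that first discovers $C$, and since each endpoint of $O_{t_C}$ contributes at most one edge to~$\Pi$, the bound follows.

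I then attempt a greedy assignment. When $C$ is a singleton $\{v\}$ with $v\in L_0$ and $v\in L'(F)$, both edges in $\{\pi_i\in\PiU(F):b_i=v\}$ are safely mapped to $v$, exploiting the $L_0$-capacity of~$2$ and the fact that any two such edges must lie on a single length-$2$ ear with $v$ as its inner node. When $C$ is non-trivial, its leaves lie in $L_1\cup L_2$ and each admits capacity~$1$; provided sufficiently many leaves of $L(C)$ remain in $L'(F)$, the greedy choice succeeds. Otherwise I initiate a hop: every leaf of $L(C)$ missing from $L'(F)$ is absent precisely because $\mapD$ has already claimed it, so chasing one such claim back through the associated $\pi_k\in\PiD(F)$ leads to the source component $C_{\sidx{k}}$, in which the search for an available leaf is continued.

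The main obstacle is to prove that this hopping procedure always terminates and yields distinct leaves consistent with the stated capacity bounds. I plan to argue by induction over an appropriate ordering of components, combined with cycle-witness reasoning in the spirit of Lemmata~\ref{lemma:situation-a-unexplored-leaf}--\ref{lemma:situation-a-mapping}: any would-be cycle in the hopping sequence produces a concatenation of ear paths, $F$-component paths, and edges~$\pi_k$ that still contains both endpoints of some $\pi_\ell\in E'\setminus F$ while avoiding $\pi_\ell$ itself, contradicting the essentiality of $\pi_\ell$ in $E'\cup F$. The final capacity check then reduces to noting that leaves in $L_1\cup L_2$ belong to exactly one non-trivial component and are touched at most once, whereas $L_0$-leaves admit the relaxed capacity~$2$ precisely to cover the case where two situation-\sitU edges share the same singleton target.
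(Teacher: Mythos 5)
Your overall strategy is the right one (it mirrors the paper's: search for an unused leaf by hopping between components along \mapD-claims, ruling out bad configurations via essentiality/witness contradictions), and your preliminary observation that all edges of $\PiU(F)$ with inner node in a fixed component lie on a single ear (hence at most two per component) is correct. But the proposal has two genuine gaps. First, your hop rule only covers claims coming from \emph{outside} the current component: if a leaf $x\in L(C)$ is claimed by some $\pi_k\in\PiD(C)$ itself (i.e.\ $\mapD(\pi_k)=x$ with $b_k\in V(C)$), then ``chasing the claim back to the source component $C_{\sidx{k}}$'' returns you to $C$ and the search makes no progress. The paper needs a separate mechanism for exactly this situation (its case~(1)): when $\PiD(C)\neq\varnothing$, one leaves $C$ through the $a$-endpoint of the \emph{earliest} such edge $\pi_{\sidx{k}}$, which also supplies the ``goes back in time'' monotonicity that the termination argument hinges on. Without this, neither progress nor termination of your procedure is established.

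Second, the heart of the lemma --- that the searches for different $\pi_i,\pi_j\in\PiU(F)$ end at distinct leaves (up to capacity $2$ on $L_0$) --- is only announced as a plan, and the plan's key claim is not true as stated: it is \emph{not} the case that every collision yields a concatenated path contradicting essentiality. When $\theta(b_i)=\theta(b_j)$ (two situation-\sitU edges on the same ear), a collision can genuinely occur and no witness exists; the paper resolves it constructively, by rerouting one of the two assignments to a second unused leaf of the common component or by using the $L_0$-capacity of $2$, and it additionally pre-splits the component when $b_i=b_{i+1}\notin L_0$ so that the two searches start in disjoint parts. Your closing assertion that leaves in $L_1\cup L_2$ ``are touched at most once'' is precisely what this case analysis must prove and cannot be taken for granted, since searches routinely end far from the component in which they start. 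So the proposal sketches the correct framework but is missing the case-(1) hop, the processing order and time-monotonicity needed for termination, and the equal-time collision analysis; as written it does not constitute a proof.
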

\begin{proof}
 We give an algorithm that establishes our mapping $\eta$.
 Let $C \subseteq F$ be a subtree
  and $w \in V(C)$.
 Consider the following recursive algorithm
  which,
  invoked on $(C, w)$,
  tries to construct a path $P$
  between $w$
  and a leaf $x \in L'(F)$.
 $P$ is initially empty and will be extended in each recursion step.
 Trivially, if there is an $x \in L'(C)$,
  the algorithm
  adds the unique path in $\pathset{C}{w}{x}$ to $P$
  and terminates.
 Otherwise,
  we have $L'(C) = \varnothing$.
 There are two cases:
 \begin{enumerate}
 \item
  There is a $\pi_k \in \PiD(C)$.
  Let $C'$ be the component containing $a_\sidx{k}$.
  We add to $P$ the unique path in $\buildpath{\pathset{C}{w}{b_\sidx{k}}, \pi_\sidx{k}}$
   and recurse on $(C', a_\sidx{k})$.
 \item
  We have $\PiD(C) = \varnothing$
  but then
  there is a component $C' \neq C$
   with $\pi_k \in \PiD(C')$ and~$\mapD(\pi_k) = a_k \in L(C)$.
  We add to $P$ the unique path in $\buildpath{\pathset{C}{w}{a_k}, \pi_k}$
   and recurse on~$(C', b_k)$.
 \end{enumerate}
 We define $C_i^*$ for a given $\pi_i \in \PiU(F)$
  to be the component in $C_i \setminus O_{\theta(b_i)}$
  that contains $b_i$.
 However, there is a tricky exception:
  if $\pi_{i+1} \in \PiU(F)$
   and $O_{\theta(b_i)} = \{\pi_i, \pi_{i+1}\}$,
   i.e.,
   if we have $b_i = b_{i+1}$.
 Then, if $b_i \in L_0$,
  we say that $C_i^*$ and $C_{i+1}^*$ consist only of $b_i$
  ($\eta$ may map to it twice anyhow).
 Otherwise,
  we have at least two leaves in $C_i = C_{i+1}$.
 By removing an edge $e$ with $V(e) = \{b_i, z\}$ from $C_i$,
  we obtain the two components
  $C_i^*$ and $C_{i+1}^*$
  such that
   $b_i \notin V(C_i^*)$
   and
   $b_i \in V(C_{i+1}^*)$.
 For technical simplicity,
  we set $b_i' := b_i$ if $b_i \in C_i^*$
  and $b_i' := z$ otherwise.
 However, $\eta$ will never map to $z$ in the following.

 We now construct $\mapU$
  by invoking the algorithm
  on $(C_i^*, b_i')$
  for each $\pi_i \in \PiU(F)$
  in chronological order from the latest to the earliest component;
  we set $\mapU(\pi_i) := x$ where $x$ is the found leaf.
 By construction,
  $b_i$ is the earliest node in $P$,
  since otherwise we would have a path witness $\pathwitness{b_i}{P}$.

 Assume by contradiction that the algorithm does not terminate.
 Consider the recursion step
  where edges are added to $P$ that are already included in $P$.
 $P$ contains a cycle $Q$.
 Note that in case~(1),
  we have
  $\theta(a_\sidx{k}) < \theta(w)$,
  i.e.,
  we go back in time only,
  and thus
  $Q$ also involves a case~(2) step.
 On the other hand,
  after a recursion step handling case~(2),
  we either terminate
  or recurse into case~(1).
 Hence there is a component $C$
  such that
  by case~(2) there is a $\pi_j \in \PiD(C)$
   with $b_j \in V(C)$
   and $j \notin \startsD$
  and
  by case~(1) we have $b_\sidx{j} \in V(C)$.
 Now
  $Q' := Q \setminus \{\pi_\sidx{j}\}$ is a path
  between $b_\sidx{j}$ and $a_\sidx{j}$;
  $\pathwitness{\sidx{j}}{Q'}$
  is a path witness.

 Now that we can ensure that the algorithm terminates,
 consider an arbitrary $\pi_i \in \PiU(F)$.
 For an invocation of the algorithm on $(C_i^*, b_i')$,
  let $x_i \in L'(F)$ be the resulting leaf
  and $P_i$ the resulting $b_i$-$x_i$-path.
 Assume that
  there is a $\pi_j \in \PiU(F)$
  with
  $x_i \in \mapU(\pi_j)$.
 Note that $j > i$
  since we invoke the recursive algorithm from last to first index.
 Let $y$ be the first node
  that $P_i$ and $P_j$ have in common,
  and let
  $P'$ be the unique path in $\buildpath{\pathset{P_j}{b_j}{y}, \pathset{P_i}{y}{b_i}}$.
 If $\theta(b_j) > \theta(b_i)$,
  we have a path witness
  $\badpath{j}{P', \pi_i}$.
 Now consider the case $\theta(b_j) = \theta(b_i)$.
 Let $C$ be the component containing $y$.
 We distinguish the following subcases:
 \begin{itemize}
  \item
   If $L'(C) = L(C)$,
    we have $x_i \in L(C)$.
   If $x_i \in L_0$,
    we set $\eta(b_i) := x_i$.
   Otherwise there is an $x_j \in L(C) \setminus \{x_i\}$.
   Since $\mapU(b_j)$ is already set to $x_i$,
    we set $\mapU(b_i) := x_j$.
  \item
   If $P'$ enters $C$ using case~(1),
    there is a $\pi_k \in \PiD(P')$
    with $k \in \startsD$
    and $a_k \in V(C)$.
   Then
    $\pathwitness{k}{\pathset{P'}{b_k}{b_i}}$
    or
    $\pathwitness{k}{\pathset{P'}{b_k}{b_j}}$
    is a path witness
    since $\theta(b_k) > \theta(b_i) = \theta(b_j)$.
  \item
   If $P'$ enters $C$ using case~(2),
    there is a $\pi_k \in \PiD(P')$
    with $k \notin \startsD$
    and $b_k \in V(C)$.
   Hence there is a $\pi_\sidx{k} \in \PiD(C)$
    and
    $\badpath{\sidx{k}}{\pathset{P'}{b_\sidx{k}}{b_i}}$
    or
    $\badpath{\sidx{k}}{\pathset{P'}{b_\sidx{k}}{b_j}}$
    is a path witness
    since $\theta(b_\sidx{k}) > \theta(b_i) = \theta(b_j)$.
    \qedhere
 \end{itemize}
\end{proof}

It is now easy to show
 Lemma~\ref{lemma:charge-u-and-d}
 using
 Lemmata~\ref{lemma:situation-a-mapping}
 and~\ref{lemma:situation-e-mapping}.
We first charge all situation~\sitD nodes,
 that is,
 we set $\chi_\sitD(v) = 1$ for all $v \in \mapD(\PiD(F))$.
Now we charge all situation~\sitU nodes using $\mapU$,
 that is,
 we have $\chi_\sitU(v) = 1$ for all $v \in L'(F) \setminus L_0$
 and $\chi_\sitU(v) \leq 2$ for all $v \in L'(F) \cap L_0$.

\section{Conclusion}

We presented a simple 3-approximation algorithm for 2ECSS with general edge costs.
While there have been primal-dual approximations before (but none achieving a better ratio based on the primal-dual method),
 they require two grow phases (followed by a cleanup phase)
 to first compute a tree
 and then augment this tree to a 2-edge-connected solution.
Our approach does not require this separation,
 by the (to our best knowledge) new idea of growing the solution only at leaves.

While our primal-dual analysis is non-trivial,
 the resulting algorithm is very straight-forward to implement
 with $\bigO(\min\{nm,m + n^2 \log n\})$ time,
 requiring only very basic graph operations and the simplest data structures.
An implementation with time $\bigO(nm)$ is remarkably simple.
This is in contrast to the other known primal-dual algorithms.
Of those, only the algorithm in~\cite{GGW98} achieves a faster running time of
 $\bigO(n^2+n\sqrt{m\log\log n})$, but at the cost of requiring
 intricate data structures and subalgorithms detailed in separate papers~\cite{GGST86,WGMV95}.
For sparse graphs, $m\in\bigO(n)$, our running time
 is in fact equivalent. For dense simple graphs, $m\in\bigO(n^2)$, we have $\bigO(n^2\log n)$
 instead of their $\bigO(n^2\sqrt{\log\log n})$.

Note that on instances with uniform costs,
 the ratio naturally drops to the trivial approximation ratio~2.
We may also note that former tight examples (for example, the tight instances for the 3-approximation given in \cite{FJ81,KR93})
 are now approximated with factor 2.
Moreover,
 by a simple extension,
 our algorithm can also compute lower bounds
 (which could be useful for branch-and-bound algorithms and instance preprocessing),
 without changing its runtime complexity.

It would be interesting to see if (and how) it is possible
 to improve our approach to achieve an even better running time or approximation ratio,
 and/or
 to transfer it to 2-node-connectivity
 or generalized edge-connectivity (e.g., $\{0,1,2\}$-survivable network design) problems.

\bibliography{refs}
\end{document}